\definecolor{nb}{rgb}{.6,.176,1}
\definecolor{sienna}{rgb}{1,0,0}
\definecolor{darkgreen}{rgb}{0,.5,0}
\newtheorem{theorem}{Theorem}
\newtheorem{proposition}{Proposition}
\newtheorem{lemma}{Lemma}
\newtheorem{remark}{Remark}
\newtheorem{defi}{Definition}
\newtheorem{assumption}{Assumption}
 \newcommand {\st}{\mathcal{S}}
 \newcommand{\tp}{\tilde{p}}
 \newcommand{\tq}{\tilde{q}}
  \newcommand{\tr}{\tilde{r}}
\begin{document}

 \renewcommand{\labelenumi}{{(\roman{enumi})}}

 \title{{\bf Influencing Opinions of Heterogeneous Populations over Finite Time Horizons}}

\author[$\star$]{Arunabh Saxena}
\author[$\star$]{Bhumesh Kumar}
\author[$\star$]{Anmol Gupta}
\author[$\dagger$]{Neeraja Sahasrabudhe}
\author[$\star$]{Sharayu Moharir}

\affil[$\star$]{Department of Electrical Engineering, Indian Institute of Technology Bombay}
\affil[$\dagger$]{Department of Mathematical Sciences, Indian Institute of Science Education and Research Mohali}

\date{}

  \maketitle

\begin{abstract} 
	
	In this work, we focus on strategies to influence the opinion dynamics of a well-connected society. We propose a generalization of the popular voter model. This variant of the voter model can capture a wide range of individuals including strong-willed individuals whose opinion evolution is independent of their neighbors as well as conformist/rebel individuals who tend to adopt the opinion of the majority/minority. 
	
	Motivated by political campaigns which aim to influence opinion dynamics by the end of a fixed deadline, we focus on influencing strategies for finite time horizons. We characterize the nature of optimal influencing strategies as a function of the nature of individuals forming the society. Using this, we show that for a society consisting of predominantly strong-willed/rebel individuals, the optimal strategy is to influence towards the end of the finite time horizon, whereas, for a society predominantly consisting of conformist individuals who try to adopt the opinion of the majority, it could be optimal to influence in the initial phase of the finite time horizon.
	
\end{abstract}

\section{Introduction}
\label{Sec:Intro}

Opinion dynamics have been a subject of study in various fields including sociology, philosophy, mathematics, and physics for a very long time \cite{xia2011opinion}. In this work, we focus on a variant of a widely studied binary opinion dynamics model known as the voter model \cite{holley1975ergodic, clifford1973model}. In the voter model, society is modeled using a graph where each individual is a node and edges represent links between these individuals. Each individual holds one of two possible opinions, e.g., pro-government and anti-government. The opinions of individuals evolve over time. Assuming time is slotted, one individual is chosen uniformly at random at the beginning of each time-slot. This individual then adopts the opinion of one of its neighbors, chosen uniformly at random. The voter model is a useful framework to study opinion dynamics and the spread of competing epidemics. Variants and generalizations of the voter models have also been studied \cite{yildiz2011discrete, majmudar2015voter}. {\let\thefootnote\relax\footnote{A preliminary version of this work appeared in  \cite{MTNS}.}}

Our model generalizes the classical voter model. In each time-slot, the opinion of the selected individual evolves according to a general linear function of the opinion of its neighbors. 
This modification to the voter model allows us to model a variety of natures of individuals in society. For example, we can model strong-willed individuals by making the opinion evolution of the selected individual independent of the opinions of its neighbors. Similarly, we can model conformist and rebel individuals when the selected individual tends to adopt the opinion of the majority and minority respectively.  

Use of social networks and other media outlets for political campaigning and advertising is on the rise. While opinions of individuals evolve organically over time, this evolution can be influenced by effective campaigning. Resource limitations like a fixed budget or limited manpower restrict the set of feasible influencing strategies and motivate the need to use  available resources efficiently. 

In political campaigning, the goal is to influence as many individuals as possible by the end of a fixed deadline.  Motivated by this, we focus on designing influencing strategies that maximize the number of individuals with a positive opinion at the end of a known and finite time horizon \cite{kandhway2014run}. The optimal influencing strategy is one that maximizes the number of individuals with a favorable opinion at the end of this time horizon. 


Most political campaigns tend to ramp up their advertising as the elections gets closer. This strategy intuitively makes sense as once influenced by the advertisements towards the end of the time horizon, there is very little time for individuals to change their opinions. In this work, the goal is to understand if this advertising strategy is always optimal for our stylized model. 

The key takeaway from this work can be summarized as follows. For a society consisting primarily of strong-willed individuals who are unaffected by the opinion of their peers but are susceptible to external influence or rebels who tend to adopt the opinion of the minority, the optimal influencing strategy is to influence towards the end of the finite time horizon.  Contrary to this, if individuals are heavily influenced by their peers and are likely to adopt the opinion of the majority, in some cases, it is optimal to influence at the beginning of the time horizon. Intuitively, this is because increasing the fraction of individuals with a favorable opinion at the beginning of the finite time horizon has a cascading effect on opinions of the society as a whole. In addition, we also conclude that influencing at the beginning of the time horizon tends to be more effective than influencing at the end only when individuals with a favourable opinion are less likely to change their opinion than individuals with an unfavourable opinion. 

\subsection{Related Work}

The preliminary version of this work \cite{MTNS} was limited to conformists individuals, i.e., those who tend to adopt the opinion of the majority of their neighbors and strong-willed individuals, i.e., those whose opinion dynamics are independent of the opinion of their neighbours. In this work, we also consider a third type of individuals, called rebels. These individuals tend to adopt the opinion of the minority of their neighbors. In addition, in \cite{MTNS}, an individual’s opinion evolution mechanism is time-invariant, whereas in this work, we allow individuals to change their behavior across time.  

Close to our setting, \cite{majmudar2015voter} focuses on the voter model and generalizes it to include external influences. The key takeaway is that the effect of external influences overpowers node-to-node interactions in driving the network to consensus in the long term. In \cite{yildiz2011discrete}, the focus is on studying the effect of stubborn agents, i.e., agents who influence others but do not change their opinion, on the opinion dynamics of the network. The authors also study the problem of optimal placement of these stubborn agents to maximize the effect on the network. 

Designing optimal influencing strategies has been the subject of study in many works including \cite{kandhway2014run, kotnis2017incentivized, kempe2005influential, eshghi2017spread}. Refer to \cite{eshghi2017spread} for a detailed survey of various works in this domain. Unlike our work, most of these works focus on the infinite time horizon setting. In \cite{kandhway2014run}, the focus is on characterizing the optimal influence strategy to maximize the spread of an epidemic in a network. In \cite{kotnis2017incentivized}, the focus is on minimizing the cost incurred by the influencer to reach a fixed fraction of nodes in the network. In \cite{kempe2005influential}, the authors propose a general model of influence propagation called the decreasing cascade model and analyze its performance with respect to maximizing the spread of an idea. In \cite{eshghi2017spread}, the focus is on designing optimal advertising strategies in the presence of multiple advertising channels.

A related body of work focuses on preventing the spread of disease/viruses in networks (refer to \cite{asano2008optimal, ledzewicz2011optimal, lashari2012optimal} and the references therein). Our work differs from this body of work since we focus on strategies to increase the spread of favorable opinion in the network.

In \cite{chierichetti2009rumor, boyd2005gossip, pittel1987spreading}, the focus is on analyzing the performance of various rumor spreading strategies. These works do not focus on finding the optimal strategies for information spread.  

Other opinion dynamic models studied in literature include the linear threshold model where each edge has a non-negative weight and a node becomes active if the sum of the weights corresponding to its active neighbors crosses a threshold. Another example is the majority model where, in each round, all nodes switch their opinion the majority opinion in their neighborhood. Unlike these two models where opinions evolve in a deterministic manner, in our model, the opinion evolution of nodes is a random process. In addition, game-based models like the best response and noisy best response model have also been studied.

\subsection{Organization}

The rest of this paper is organized as follows. In Section \ref{section:setting}, we formally define our opinion dynamics model. In Section \ref{section:prelims}, we discuss some preliminary results on stochastic approximation which are used in the subsequent analysis. In Section \ref{section:main_results}, we state and discuss our main results. The proofs of the results discussed in Section \ref{section:main_results} are provided in Section \ref{section:proofs}. We conclude the paper in Section \ref{sec:Conclusion}. Some additional results are presented in the appendix. 

\section{Setting}
\label{section:setting}

We consider a finite population of $M$ individuals, where each individual has a binary opinion (``Yes" or ``No") about a certain (fixed) topic of interest. The opinion of each individual evolves over time. We assume the presence of an external influencing agency with a limited budget which tries to shape the opinion of the population over time. 

We assume that time is slotted and label the individuals in the population from $\{1, \ldots, M \}$. We define random variables $\{ I_i(t) \}_{1 \leq i \leq M, t \geq 0}$ taking values in $\{0, 1\}$, where $I_i(t)$ denotes the opinion of the $i^{th}$ individual in time-slot $t$. Thus,
$$ I_i(t) = \begin{cases}
1 \ \ \mbox{if the opinion of} \ i^{th} \ \mbox{individual at time} \ t \mbox{ is \textit{Yes}} \\
0 \ \ \mbox{if the opinion of} \ i^{th} \ \mbox{individual at time} \ t \mbox{ is \textit{No}}. 
\end{cases} $$
Let
$
Y(t) = \sum\limits_{i=1}^M I_i(t) \text{ and } N(t) = M - Y(t)
$
denote the total number of people at time $t$ with the opinion ``Yes" and ``No" respectively.  

At each time $t$, an individual (say $i_t$) is chosen uniformly at random from the population of $M$ individuals. The opinion of the chosen individual $i_t$ evolves in time-slot $t$ according to a Markov process with the following transition probabilities. 
\begin{align*}
P(I_{i_t}(t+1) = 0 | I_{i_t}(t) = 1) & = p_t\\
P(I_{i_t}(t+1) = 1 | I_{i_t}(t) = 1) & = 1-p_t\\
P(I_{i_t}(t+1) = 1 | I_{i_t}(t) = 0) & = q_t\\
P(I_{i_t}(t+1) = 0 | I_{i_t}(t) = 0) & = 1-q_t.
\end{align*}

The values of $p_t$ and $q_t$ depend on whether the individual is being externally influenced in time-slot $t$ or not. If the chosen individual is being externally influenced in time-slot $t$, $p_t = \tilde{p}$ and $q_t = \tilde{q}$, else, their opinion evolves in one of the following three ways. 

\begin{itemize}
	\item[--] Type $S$ (Strong-Willed): In this case, the chosen individual is not affected by their peers, more specifically, $p_t = p$ and $q_t = q$. 
	\item[--] Type $C$ (Conformist): In this case, the probability of an individual changing their opinion increases with the fraction of the population holding the opposite opinion, more specifically,  $p_t = p \frac{N(t)}{M}$ and $q_t = q \frac{Y(t)}{M}.$ 
	\item[--] Type $R$ (Rebel): In this case, the probability of an individual changing their opinion decreases with the fraction of the population holding the opposite opinion, more specifically, \\ $p_t = p \left(1-\frac{N(t)}{M}\right)$ and $q_t = q \left(1-\frac{Y(t)}{M}\right)$. 
\end{itemize}

Since an individual can also adopt a mixture of these approaches over time, we study two models which allow time-varying nature in individuals. 

\begin{enumerate}
\item[--]{Model I: Hybrid $S/C$} \\
In each time-slot without external influence, opinion evolution of the chosen individual is Type $S$ with probability $\lambda$ and Type $C$ otherwise, independent of all past choices. 

Note that at the two extreme values of $\lambda$, i.e., $\lambda = 0$ and $\lambda = 1$, the individuals are either only strong-willed or only conformists. These special cases were studied in \cite{MTNS} along with an extension to the case of strong-willed population with increasing adamancy with time. Some of the results in \cite{MTNS} can be obtained as special cases of results in this paper.
\item[--]{Model II: Hybrid $C/R$} \\
In each time-slot without external influence, opinion evolution of the chosen individual is Type $C$ with probability $\mu$ and Type $R$ otherwise, independent of all past choices. 
\end{enumerate}
We focus our attention on these two models since they are analytically tractable and lead to insightful results. The analysis for the Hybrid $S/R$ model is similar to that of Hybrid $S/C$ and can be studied using the same tools. \color{black}

For both these models, we focus on the evolution of opinion of the population over a time horizon of $T$ consecutive time-slots. The advertising agency can influence the opinion in at most $bT$ of the $T$ time-slots, where $0 < b \leq 1$. The goal of the advertising agency is to maximize the number of individuals holding the ``Yes" opinion at the end of the $T$ time-slots. The agency needs to decide when to exert influence in order to achieve this goal.
 
\section{Preliminaries}
 \label{section:prelims}
In this section, we introduce the mathematical framework used in the rest of this paper. 
The dynamics for random variable $N(t)$ (resp. $Y(t)$) are as follows.
\begin{equation} \label{Nos} 
N(t+1) = N(t) + \chi(t+1), 
\end{equation}
where, $\chi(t+1)$ is a random variable taking values in $\{-1, 0, 1\}$ denotes the change in the net opinion of the population at time $t+1$. Let $\mathcal{F}_{t}$ denote the $\sigma$-field generated by the random variables $\{\chi(1), \chi(2), \ldots, \chi(t)\}$. The evolution of opinion is governed by the random process $\chi(t)$. We have:
\begin{eqnarray} \label{BasicModel} 
&& P(\chi(t+1)=x | \mathcal{F}_t) \nonumber \\
&&=
\begin{cases}
 \delta_{N}(t)q_t  &\text{ for} \ x=-1 \\ 
 (1-\delta_{N}(t))p_t  &\text{ for} \ x=1 \\
 1 - p_t - \delta_{N}(t)(q_t - p_t) &\text{ for} \ x=0,
  \end{cases} 
\end{eqnarray}
where, $\delta_N(t) = \frac{N(t)}{M}$. Note that the $\chi(k)$'s are conditionally independent.  As metioned in Section \ref{section:setting}, the values of $p_t$ and $q_t$ depend on whether the individual is being externally influenced in time-slot $t$ or not. If the chosen individual is being externally influenced in time-slot $t$, $p_t = \tilde{p}$ and $q_t = \tilde{q}$, else, their opinion evolves in one of the three ways described in Section \ref{section:setting}. 
\begin{defi}[Optimal Strategy]\label{defoptimal}
	We call a strategy optimal if the influence according to that strategy results in a larger expected number of \lq \lq Yes" at the end of time $T$ than the expected number of \lq \lq Yes" at the end of time $T$ using any other influence strategy.
\end{defi}
If strategy $\st_1$ is better than strategy $\st_2$, it is denoted by $\st_1 \gg \st_2$. Note that, for any strategy, the number of time-slots that can be influenced is fixed. We want to find an optimal strategy, in the sense of definition~\ref{defoptimal}, that identifies the time-slots where the influence should be exerted. As we shall see in most cases, due to a monotone argument, it is sufficient to compare the strategies of influencing in the first $bT$ and the last $bT$ time-slots respectively. 
We use the following definition in the rest of the paper. 

\begin{defi} [Influence Strategies]\color{white}df\color{black}
	\label{defn:strategies}
	\begin{enumerate}
		\item[--] $\st_F$: The strategy to influence in the first $bT$ time-slots. 
		\item[--] $\st_L$: The strategy to influence in the last $bT$ time-slots. 
	\end{enumerate}
\end{defi}

\noindent Total number of people with opinion \lq \lq No" at time $t$ is given by $N(t) = N(0) + \sum_{k=1}^t \chi(k)$. It is reasonable to assume that the influencing agency exerts influence in a way that leads to maximizing the expected number of people with opinion “Yes” at the end of time $T$. Consider a family of random variables $\{ \chi^{(p, q)} \}$, parametrized by $p, q \in [0, 1]$, defined as follows:
	\begin{equation*} \label{chi}
	P(\chi^{(p, q)} = x) = \begin{cases}
	p (1-\delta) & \text{if} \ x=1 \\
	q \delta & \text{if} \ x=-1  \\
	1-(1-\delta) p- \delta q  & \text{if} \ x=0.
	\end{cases}
	\end{equation*}
	for $\delta \in [0, 1]$. Then for $p_1 > p_2$ and $q_1 < q_2$, $\chi^{(p_1, q_1)}$ stochastically dominates $\chi^{(p_2, q_2)}$. In view of this, we assume the following.
	

\color{black}

\begin{assumption}[Rational Influence] \label{rationalinfluence}
	We assume that the external influence is such that $\tp < \tq$. 
\end{assumption}

To analyze the evolution of random variable $\delta_N(t)$, we use the theory of constant step-size stochastic approximation. A constant step-size stochastic approximation scheme is given by a recursion of the form: 
\begin{equation}
x(n+1) = x(n) + a \left[h(x(n)) + M(n+1)\right]. \label{SAscheme}
\end{equation}
for $a>0, n \geq 0, x \in \mathbb{R}^d$, such that:
\begin{enumerate}[(i)]
	\item $h: \mathbb{R}^d \to \mathbb{R}^d$ is Lipschitz.
	\item $\{ M_n \}_{n \geq 0}$ is a square-integrable Martingale difference sequence with respect to a suitable filtration.
\end{enumerate}
In addition, we assume that $\{ \| x_n \|^2 \}$ are uniformly integrable and $\sup_n E[\| x_n \|^2]^{1/2} < \infty$. From Chapter $9$ of \cite{borkar2008stochastic}, we know that solution(s) of \eqref{SAscheme} remain \lq \lq close" to the solution(s) of the O.D.E. given by: $\dot{x}(t) = h(x(t))$. Note that because of constant step-size (as opposed to the traditional decreasing step-size), solution(s) of the recursion do not converge almost surely to solution(s) of the O.D.E.. However, it is possible to get reasonable high probability bounds on the difference between solutions of the recursion and that of the O.D.E.. The idea is similar to that of standard decreasing step-size stochastic approximation and involves comparing the trajectories of the solutions of the O.D.E. with the interpolated trajectories of \eqref{SAscheme}. Define $t(n) = na$ for $n \geq 0$. Define $\bar{x}(\cdot)$ by $\bar{x}(t(n))=x_n \ \forall n$ such that $\bar{x}(t)$ is defined on $[t(n), t(n+1)$) by the following linear interpolation:
$$ \bar{x}(t) = x_n + (x_{n+1}-x_n)\frac{t-t(n)}{t(n+1)-t(n)}.$$
Let $x^s(\cdot)$ denote the solution of the O.D.E. for $t \geq s$, with $x^s(s)=\bar{x}(s)$. Thus, for $t \geq s$, $ x^s(t) = \bar{x}(s) + \int_s^t h(x^s(r))dr $.
From Lemma 1, Chapter 9 \cite{borkar2008stochastic}, we know that $E\left[\sup_{t \in [0, T]} \| \bar{x}(s+t) - x^s(s+t) \|^2\right] = O(a)$ for any $T>0$. In fact, if the O.D.E. has a globally stable equilibrium point $x^*$, Theorem 3, Chapter 9 \cite{borkar2008stochastic} implies that
$$ \lim\sup_{n \to \infty} P(\|x_n-x^*\| > \epsilon) = O(a).$$
While the discussion in \cite{borkar2008stochastic} is for the case when $\{ M_{n} \}$ is i.i.d. with Gaussian sequence, it is not difficult to see that the results hold for Martingale difference sequences.

\subsubsection{Stochastic Approximation Scheme for Our Model}
We rewrite \eqref{Nos} as a constant step-size stochastic approximation scheme for the fraction of \lq \lq No"s at time $t$.
\begin{eqnarray*} 
	\delta_N(t+1) &=& \delta_N(t) + \frac{1}{M} E[\chi(t+1) | \mathcal{F}_t] \\
	&&+ \frac{1}{M} [\chi(t+1) - E[\chi(t+1) | \mathcal{F}_t] ]. 
\end{eqnarray*}
Conditions (i), (ii) above and other conditions like boundedness of trajectories can be verified. From the theory of stochastic approximation, we know the solutions of the above system track the solutions of the O.D.E. $\dot{\delta}_{N}(t)=\frac{1}{M}E[\chi(t+1)|\mathcal{F}_{t}]$. Therefore, from~\eqref{BasicModel}, we get that the opinion dynamics is governed by the O.D.E.:
\begin{equation}\label{BasicODE}
M\dot{\delta}_N(t) = (1-\delta_N(t))p_t - \delta_N(t) q_t.
\end{equation}
It is sufficient to analyse the corresponding O.D.E. system to obtain the optimal strategies for the advertising agency. Figure~\ref{fig:tracking} shows that the O.D.E. corresponding to the constant step-size stochastic approximation scheme tracks the difference equation well throughout. 

\begin{figure}[!htbp] 
	\centering
	\includegraphics[width=0.5\textwidth]{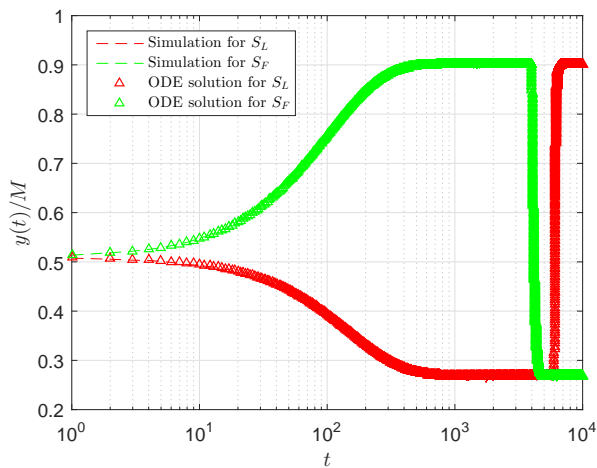}
	\caption{Comparison between $\st_L$ and $\st_F$ for $M = 100$, $T = 10000$, $b=0.4$, $p=0.8$, $q=0.4$ and $\widetilde{p}=0.1, \widetilde{q}=0.9$ as a function of time for Model I with $\lambda = 0.5$. The O.D.E. corresponding to the constant stepsize stochastic approximation scheme tracks the difference equation well for all values of $t$.}
	\label{fig:tracking}
\end{figure}

\section{Main Results and discussion}
\label{section:main_results}

\subsection{Model I: Hybrid $S/C$}\label{sec:SC}
In this section, we characterize the optimal influence strategies for the Hybrid $S/C$ case. Recall that this means that in each time-slot when there is no external influence, the opinion of the randomly chosen individual evolves as Type S (strong-willed) with probability $\lambda$ and Type C (conformist) with probability $1-\lambda$. 


Recall from Definition \ref{defn:strategies} that $\st_L$ and $\st_F$ are strategies to influence in the last and first $bT$ time-slots respectively.
\begin{theorem}
	\label{theorem:Model I}
	For Model I defined in Section \ref{section:setting} and under Assumption \ref{rationalinfluence}, if the advertiser has a budget of $bT$ time-slots for $0 < b < 1$, 
	\begin{enumerate}
		\item[(i)] If $p=q$, then, $\forall \lambda \in (0,1]$, $\st_L$ is optimal and for $\lambda=0$, all strategies perform equally well.
		\item[(ii)] If $p>q$, then, $\forall \lambda \in [0,1]$ and for $T = \omega(M)$ or $T=o(M)$, $\st_L$ is the optimal strategy.
		\item[(iii)]  If $p<q$, 
		\begin{enumerate}
		\item for $T=\omega (M)$, $\forall \lambda \in (0,1]$, $\st_F$ is optimal and for $\lambda=0$, all strategies perform equally well; 
		\item for $T=o (M)$, $\exists \lambda^{*} \in [0, 1]$  such that $\st_F$ is optimal and $\st_L$ is strictly sub-optimal when $\lambda < \lambda^*$, $\st_L$ is optimal when $\lambda > \lambda^*$, and if $\lambda = \lambda^*$, both strategies $\st_F$ and $\st_L$ perform equally well. 
		\end{enumerate}
	\end{enumerate}
\end{theorem}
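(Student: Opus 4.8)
The plan is to pass from the stochastic system to its limiting O.D.E.\ and to carry out the entire comparison at the level of \eqref{BasicODE}, transferring the conclusions back with the tracking estimates of Section~\ref{section:prelims}. Writing $\delta$ for the fraction of \lq\lq No\rq\rq\ and rescaling time by $1/M$, the dynamics are governed by two autonomous vector fields: the influenced field $F_I(\delta)=\tp-(\tp+\tq)\delta$ and the uninfluenced hybrid field $F_U(\delta)=\lambda A(\delta)+(1-\lambda)B(\delta)$, where $A(\delta)=p-(p+q)\delta$ is the Type-$S$ drift and $B(\delta)=(p-q)\delta(1-\delta)$ is the Type-$C$ drift. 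Each strategy is a concatenation of the two flows over a total rescaled time $\Delta=T/M$: $\st_F$ runs $F_I$ for time $b\Delta$ and then $F_U$ for time $(1-b)\Delta$, while $\st_L$ reverses the order. First I would invoke the stated monotone argument to reduce optimality over all budget-$bT$ strategies to the pairwise comparison of $\st_F$ and $\st_L$, so that the whole theorem reduces to determining the sign of $\delta_L(T)-\delta_F(T)$ as a function of $p-q$, $\lambda$, and the regime of $\Delta$, with $\st_F$ optimal exactly when this is positive (since fewer \lq\lq No\rq\rq\ is better).

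For the long-horizon regime $T=\omega(M)$ (so $\Delta\to\infty$, and since $b\in(0,1)$ is fixed both phase lengths $b\Delta,(1-b)\Delta\to\infty$), the endpoint of each strategy is governed by the equilibrium of its \emph{final} phase. I would show $F_U$ has a unique attracting equilibrium $\delta^*_U$ in $[0,1]$: it is a root of a quadratic with $F_U(0)=\lambda p\ge0$, $F_U(1)=-\lambda q\le0$, and $F_U''=-2(1-\lambda)(p-q)$, so for $\lambda\in(0,1]$ the sign pattern of $F_U$ forces global attraction on $(0,1)$ (with $\lambda=0$ degenerating to $\delta=0$ when $p<q$ and to $\delta=1$ when $p>q$). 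Hence $\st_F\to\delta^*_U$ and $\st_L\to\delta^*_I=\tp/(\tp+\tq)$, and the comparison collapses to the sign of $\delta^*_U-\delta^*_I$. Evaluating this sign gives the three cases: $\delta^*_U=\tfrac12>\delta^*_I$ when $p=q$ and $\delta^*_U>\delta^*_I$ when $p>q$ (whence $\st_L$), while for $p<q$ the Type-$C$ component lowers $\delta^*_U$ below $\delta^*_I$ (whence $\st_F$), the degenerate $\lambda=0$ accounting for the \lq\lq all strategies equal\rq\rq\ clause.

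For the short-horizon regime $T=o(M)$ (so $\Delta\to0$), I would instead Taylor-expand each concatenated flow in $\Delta$ about the common initial condition $\delta_0$. The first-order terms coincide, because both strategies allot the same total time to each field, so the leading difference is second order:
\begin{equation*}
\delta_F(T)-\delta_L(T)=\frac{b(1-b)T^2}{M^2}\,W(\delta_0)+o\!\left(\tfrac{T^2}{M^2}\right),\quad W:=F_U'F_I-F_I'F_U .
\end{equation*}
Because $F_U=\lambda A+(1-\lambda)B$, the quantity $W$ is \emph{affine} in $\lambda$: $W(\lambda)=\lambda W_A+(1-\lambda)W_B$ with $W_A=A'F_I-F_I'A=p\tq-q\tp$ (a constant) and $W_B=B'F_I-F_I'B=(p-q)\big[\tp-2\tp\delta_0+(\tp+\tq)\delta_0^2\big]$, whose bracket is a positive-definite quadratic in $\delta_0$ (discriminant $-4\tp\tq<0$). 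Thus $\operatorname{sign}W_B=\operatorname{sign}(p-q)$; for $p<q$ one has $W_B<0$ while $W_A$ may have either sign, and since $W$ is affine in $\lambda$ it changes sign at most once, yielding the threshold $\lambda^*\in[0,1]$ of part (iii)(b), with $\st_F$ optimal where $W<0$ and $\st_L$ where $W>0$. The same expression reproduces parts (i) and (ii) in this regime (there $W_A>0$ and $W_B\ge0$ force $W>0$, hence $\st_L$), consistent with the equilibrium comparison above.

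The main obstacle, I expect, is not the algebra of equilibria or of $W$ but making the O.D.E.\ comparison \emph{rigorous as a statement about the stochastic process}, uniformly over a horizon that itself grows with $M$. Concretely, I would need the $O(1/M)$ tracking error of Section~\ref{section:prelims} controlled uniformly across the single switch between $F_I$ and $F_U$ and across all of $[0,T]$; in the short-horizon case I must show the $O(T^2/M^2)$ signal dominates the approximation error, and in the long-horizon case that the exponential relaxation to $\delta^*_U,\delta^*_I$ outpaces the accumulation of martingale noise. A secondary technical point is the treatment of the boundary value $\lambda=0$ and the determination of whether $\lambda^*$ lies in the open interval or at an endpoint, which is precisely where the sign of $W_A=p\tq-q\tp$ enters.
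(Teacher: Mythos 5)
Your overall architecture matches the paper's: reduce to the pairwise comparison of $\st_F$ and $\st_L$ by the adjacent-swap argument, then compare the two concatenated flows of \eqref{BasicODE}. Your short-horizon computation is a genuinely different and cleaner route than the paper's: the paper solves the Riccati equation explicitly through its roots $A_1,A_2$ and expands the resulting exponentials in $T/M$, whereas you extract the leading term directly as the commutator $W=F_U'F_I-F_I'F_U$, which is affine in $\lambda$ and immediately yields the single crossover of part (iii)(b). Your $W$ checks out --- for $p=q$ it reproduces exactly the second-order expansion of the paper's closed form \eqref{diff_equalpq}, namely $b(1-b)(T/M)^2\,\lambda p(\tq-\tp)$. (Note, though, that your crossover condition $\lambda W_A+(1-\lambda)W_B=0$ depends on $\tp,\tq$ while the paper's, $\delta_N(0)^2=A_1'A_2'$, does not; since the theorem only asserts existence of $\lambda^*$ this does not affect the statement, but the two computations cannot both be right.)

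There are two genuine gaps. First, part (i) is asserted for every $T$, with no asymptotic restriction, and your argument only covers $T=o(M)$ and $T=\omega(M)$. For $p=q$ the conformist drift $B$ vanishes, so both phases are affine ODEs solvable in closed form for arbitrary $T$; the paper does exactly this and obtains \eqref{diff_equalpq} as a product of manifestly nonnegative factors. You need that exact computation (or an equivalent monotonicity argument) to close the intermediate regime $T=\Theta(M)$. Second, and more seriously, your long-horizon claim for $p<q$ --- that the Type-$C$ component pushes $\delta_U^*$ below $\delta_I^*=\tp/(\tp+\tq)$ --- is unproved and is false under Assumption \ref{rationalinfluence} alone. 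At $\lambda=1$ the uninfluenced equilibrium is $p/(p+q)$, and its comparison with $\tp/(\tp+\tq)$ is exactly the sign of $q\tp-p\tq$, which Assumption \ref{rationalinfluence} does not determine; e.g.\ with the paper's own simulation parameters $p=0.4$, $q=0.8$, $\tp=0.1$, $\tq=0.9$ one gets $\delta_U^*=1/3>0.1=\delta_I^*$, so $\st_L$ wins at $\lambda=1$. Your equilibrium lens is the right one here, but it shows that the $p<q$, $T=\omega(M)$ case cannot be resolved by the sign of $p-q$ alone --- and indeed the paper's own computation for this case arrives at $\delta_N^{\st_L}(T)-\delta_N^{\st_F}(T)\approx -A_2'<0$, which favors $\st_L$ and sits uneasily with the statement of (iii)(a). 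Do not expect to prove (iii)(a) as stated without an additional hypothesis relating $(p,q)$ to $(\tp,\tq)$.
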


\color{black}

To discuss Theorem \ref{theorem:Model I} from a qualitative perspective, we introduce a property called the stickiness of an opinion. For a given value of the fraction of the population holding the same opinion as that of the individual chosen in a time-slot, the stickiness of that opinion is the probability that the chosen individual does not change their opinion by the end of the time-slot. Under this definition, for Type $S$ individuals, the stickiness of the Yes opinion is $1-p$ and that of the No opinion is $1-q$. 

The main takeaway from Theorem \ref{theorem:Model I} is that it is strictly sub-optimal to influence at the end of the time horizon when the Yes opinion is more sticky than the No opinion and the probability of an individual being affected by their peers is high. In addition, in this case, it is, in fact, optimal to influence right at the beginning of the time horizon. One way to understand this phenomenon is as follows. When individuals are heavily influenced by their peers and people are less likely to flip from Yes to No than from No to Yes, influencing the population at the beginning leads to a cascading effect which outperforms the strategy of influencing people at the very end which minimizes the probability of them switching their opinion before the end of the time horizon. 

In all other cases studied in Theorem \ref{theorem:Model I}, it is optimal to influence at the end of the time horizon. It is worth noting that in all the cases discussed in Theorem \ref{theorem:Model I}, the nature of the optimal policy is independent of the state of the population at the beginning of the time horizon.

 We now present some simulation results to illustrate that the performance of our system mirrors the trends obtained by solving the corresponding O.D.E. for this setting. 


\begin{figure}[!htbp] 
	\centering
	\includegraphics[width=0.5\textwidth]{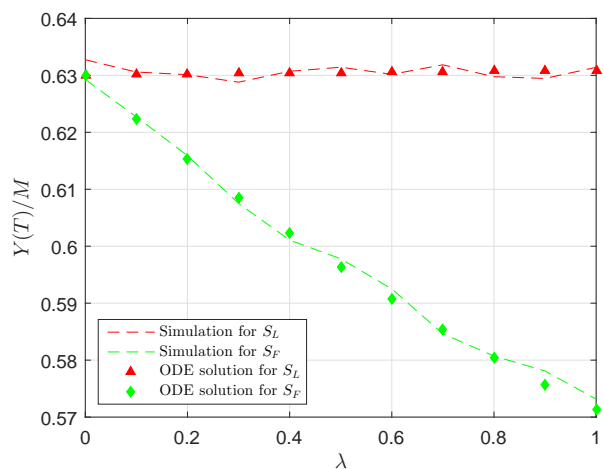}
	\caption{Comparison between $\st_L$ and $\st_F$ for $M = 10000$, $T = 10000$, $b=0.4$, $p=q=0.5$ and $\widetilde{p}=0.1, \widetilde{q}=0.9$ for different values of $\lambda$. The  $\st_L$ strategy outperforms the $\st_F$ strategy for all values of $\lambda$.}
	\label{figure:optimality_p=q}
\end{figure}
%
%
%
\begin{figure}[!htbp] 
	\centering
	\includegraphics[width=0.5\textwidth]{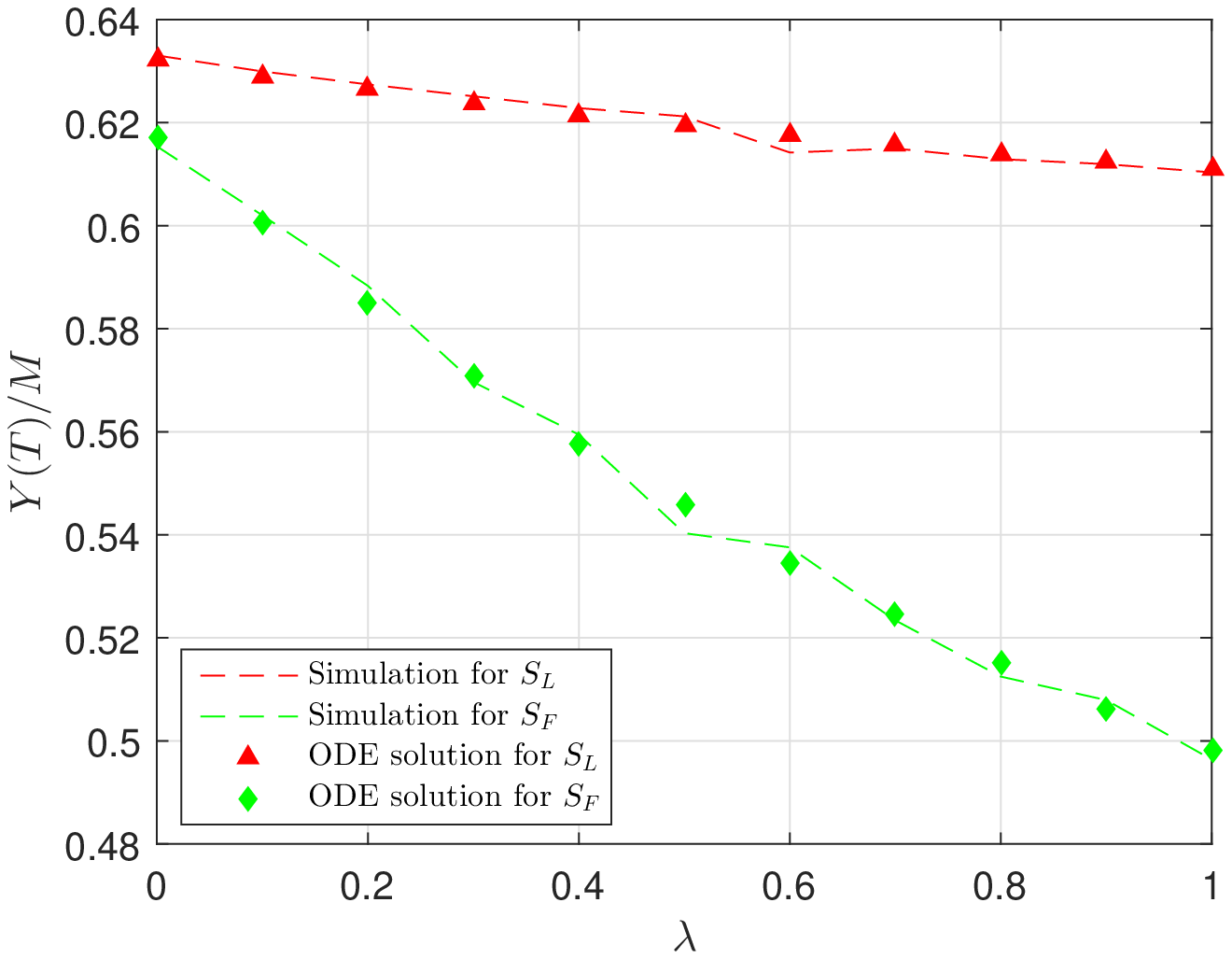}
	\caption{Comparison between $\st_L$ and $\st_F$ for $M = 1000$, $T = 1000$, $b=0.4$,  $p=0.8$, $q=0.4$ and $\widetilde{p}=0.1, \widetilde{q}=0.9$ for different values of $\lambda$. The  $\st_L$ strategy outperforms the $\st_F$ strategy for all values of $\lambda$.}
	\label{figure:optimality_p>q_1}
\end{figure}

\begin{figure}[ht]
	\centering
	\includegraphics[width=0.5\textwidth]{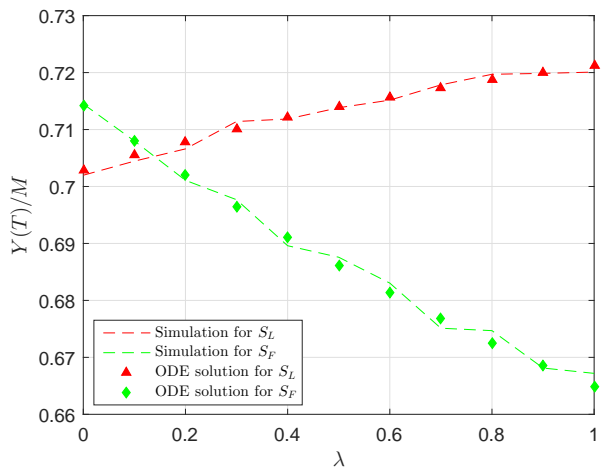}
	\caption{Comparison between $\st_L$ and $\st_F$ for $M = 1000$, $T = 100000$, $b=0.4$, $p=0.4, q=0.8$ and $\widetilde{p}=0.1, \widetilde{q}=0.9$. The O.D.E. solution is close to the simulated performance of the system. As discussed in Theorem \ref{theorem:Model I}, in this case, the $\st_L$ strategy outperforms the $\st_F$ strategy for values of $\lambda$ below a threshold and the $\st_F$ strategy outperforms the $\st_L$ strategy for values of $\lambda$ over the threshold.}
	\label{figure:crossover_p<q}
\end{figure}

In Figure \ref{figure:crossover_p<q}, we compare the performance of the $\st_L$ and $\st_F$ strategies for the setting where $p<q$. The O.D.E. solution is close to the simulated performance of the system. As discussed in Theorem \ref{theorem:Model I}, in this case, the $\st_L$ strategy outperforms the $\st_F$ strategy for values of $\lambda$ below a threshold and the $\st_F$ strategy outperforms the $\st_L$ strategy for values of $\lambda$ over the threshold.

%
%
%
%

Note that some of the results in Theorem \ref{theorem:Model I} are restricted to the case when $T=\omega(M)$ and $T = o(M)$ for analytical tractability. We now present a result which holds for all $T$ for the special case when a individual does not change their opinion from Yes to No without external influence and the external influence is perfect, i.e., $\tilde{p} = 0$, and $\tilde{q} =1$. While this is a very limited case, our motivation behind discussing the result is to show that similar trends hold for general $T$.

\begin{proposition} \label{PS1}
	For Model I defined in Section \ref{section:setting}, if the advertiser has a budget of $bT$ time-slots for $0 < b < 1$, if $p=0 \text{ and } q>0$, and $\tilde{p} = 0$, and $\tilde{q} =1$, then, $\forall \lambda \in [0, 1)$ the strategy to influence in the first $bT$ slots is optimal and is strictly better than the strategy influencing in the last $bT$ slots. 
\end{proposition}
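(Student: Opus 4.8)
The plan is to reduce the stochastic dynamics to the governing O.D.E.~\eqref{BasicODE}, as justified in Section~\ref{section:prelims}, and to compare the candidate strategies at the level of the deterministic trajectory of $\delta_N(\cdot)$. Since maximizing the expected number of ``Yes'' is equivalent to minimizing $\delta_N(T)$, it suffices to show that $\st_F$ produces a strictly smaller value of $\delta_N(T)$ than any competing strategy. First I would specialize \eqref{BasicODE} to the present regime. During an influenced slot, $\tp=0$ and $\tq=1$ give $M\dot\delta_N = -\delta_N$, while in an un-influenced slot $p=0$ annihilates the ``Yes''$\to$``No'' term and the hybrid $S/C$ mixture leaves $M\dot\delta_N = -q\,\delta_N\bigl(1-(1-\lambda)\delta_N\bigr)$. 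Two features are immediate: both vector fields are non-positive on $(0,1]$ (with $p=\tp=0$ the ``Yes'' opinion is absorbing, so $\delta_N$ can only decrease), and the un-influenced field is logistic while the influenced field is linear.

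The crucial step is a change of variables that \emph{linearizes both flows simultaneously}. Writing $v = 1/\delta_N$ (well defined since $\delta_N$ stays strictly positive), one computes $\dot v = v/M$ under influence and $\dot v = \tfrac{q}{M}\,(v-\beta)$ without influence, where $\beta := 1-\lambda$. Hence over a duration $\tau$ the two flow maps on $v$ are the affine maps $L_I^{\tau}(v)=e^{\tau/M}v$ and $L_N^{\tau}(v)=\beta + (v-\beta)e^{q\tau/M}$, with slopes $e^{\tau/M}>1$ and $e^{q\tau/M}>1$ and intercepts $0$ and $\beta\bigl(1-e^{q\tau/M}\bigr)<0$ respectively. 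Because $v$ is a decreasing function of $\delta_N$, minimizing $\delta_N(T)$ is the same as maximizing $v(T)$, and the whole problem collapses to choosing the order in which to compose these affine maps.

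With $\tau_I=bT$ and $\tau_N=(1-b)T$ I would then compute the two endpoints directly. For $\st_F$ (influence first) the composite map is $L_N^{\tau_N}\circ L_I^{\tau_I}$ and for $\st_L$ it is $L_I^{\tau_I}\circ L_N^{\tau_N}$; since the product of the slopes is the same in both orders, subtracting gives
\[
v^{\st_F}(T)-v^{\st_L}(T) \;=\; \beta\bigl(1-e^{q\tau_N/M}\bigr)\bigl(1-e^{\tau_I/M}\bigr),
\]
a product of two strictly negative factors whenever $\beta>0$ and $0<b<1$, hence strictly positive. This shows $v^{\st_F}(T)>v^{\st_L}(T)$, i.e.\ $\st_F$ strictly beats $\st_L$ for every $\lambda\in[0,1)$. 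To upgrade ``better than $\st_L$'' to ``optimal'', I would invoke the monotone exchange argument anticipated in Section~\ref{section:prelims}: for any strategy, take an un-influenced stretch of length $\tau_1$ immediately followed by an influenced stretch of length $\tau_2$ and swap their order; the same computation shows the swapped (influence-earlier) order increases $v$ at the end of the pair by $\beta\bigl(1-e^{q\tau_1/M}\bigr)\bigl(1-e^{\tau_2/M}\bigr)>0$, and since every subsequent flow map is affine with positive slope this gain is preserved up to time $T$. Repeatedly moving influence earlier can only increase $v(T)$, so the unique optimum places all influence first, which is exactly $\st_F$.

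I expect the genuinely delicate point to be the reduction to the O.D.E.\ rather than the O.D.E.\ computation itself: on the discrete chain the recursion for $E[N(t)]$ in the un-influenced phase is not closed (the conformist term produces an $E[N(t)^2]$ contribution), which is precisely why the argument is routed through the stochastic-approximation tracking of Section~\ref{section:prelims} and carried out on the deterministic trajectory. The remaining care is bookkeeping: verifying that $\delta_N$ stays in $(0,1)$ so that $v=1/\delta_N$ is legitimate throughout, and confirming that the strict sign of the commutator-type term $\beta\bigl(1-e^{q\tau_N/M}\bigr)\bigl(1-e^{\tau_I/M}\bigr)$ persists across the entire range $\lambda\in[0,1)$, $0<b<1$ claimed in the statement.
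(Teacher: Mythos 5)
Your proof is correct, and it takes a genuinely different route from the paper's. The paper treats Proposition \ref{PS1} as a corollary of its general $p<q$ analysis: it solves the Riccati O.D.E. \eqref{eq:8} by partial fractions in terms of the two roots $A_1',A_2'$, notes that $p=0$ forces $A_2'=0$, and reads off the sign of the resulting explicit expression for $\delta_N^{\st_L}(T)-\delta_N^{\st_F}(T)$ (arriving at essentially the same product $\bigl(1-e^{-bT/M}\bigr)\bigl(e^{qT(1-b)/M}-1\bigr)$ over a positive denominator that your commutator term produces). Your observation that $p=\tp=0$ makes $\delta_N=0$ a common equilibrium of both vector fields, so that $v=1/\delta_N$ simultaneously linearizes the influenced and un-influenced flows, is a cleaner mechanism: it replaces root-finding and the $D_i'$ bookkeeping with a two-line computation on non-commuting affine maps, and it makes the source of the strict inequality (the intercept $\beta(1-e^{q\tau/M})$ vanishing exactly when $\lambda=1$) transparent. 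It also buys you something the paper's proof of this proposition omits: the paper only compares $\st_F$ with $\st_L$, whereas your pairwise-exchange argument (swapping an un-influenced stretch with the influenced stretch that follows it, then propagating the gain through increasing affine maps) actually delivers the claimed optimality of $\st_F$ over \emph{all} budget-feasible strategies, in the spirit of the exchange argument the paper uses for Theorem \ref{theorem:Model I}(i). Two minor points worth recording: the strictness of the comparison requires $\delta_N(0)>0$ (if $\delta_N(0)=0$ the dynamics are frozen and all strategies tie, a degenerate case the paper also ignores), and, as you note, the entire comparison is carried out on the O.D.E. trajectory, with the reduction from the stochastic recursion justified only by the constant step-size stochastic approximation tracking of Section \ref{section:prelims} --- the same level of rigor the paper itself adopts.
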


\subsection{Model II: Hybrid $C/R$} \label{sec:CR}
In this section, we present our results for Model II. Recall that this means that in each time-slot when there is no external influence, the opinion of the randomly chosen individual evolves as Type C (conformist) with probability $\mu$ and Type R (rebel) with probability $1-\mu$.

\begin{theorem}
	\label{theorem:Model II}
	For Model II defined in Section \ref{section:setting} and under Assumption \ref{rationalinfluence}, if the advertiser has a budget of $bT$ time-slots for $0 < b < 1$, 
	\begin{enumerate}
		\item[(i)] For $\mu < 1/2$, $\st_L \gg \st_F$ for $T=o(M)$ and $T=\omega (M)$.
		\item[(ii)] For  $\mu > 1/2$, 
		\begin{enumerate}
			\item If $T = o(M)$, $\st_F \gg \st_L$ if $p < \frac{\delta_N(0)^2}{1-\delta_N(0)^2}q$ and $\mu>\frac{\delta_{N}^{2}(0)}{2\delta_{N}^{2}(0)-p/(p+q)}$, and $\st_L \gg \st_F$ otherwise. 
			\item If $T = \omega(M)$, $\st_L \gg \st_F$.
		\end{enumerate}
		\item[(iii)] For $\mu > 1/2$ and $p = \frac{\delta_N(0)^2}{1-\delta_N(0)^2}q$, for $T = o(M)$ or $T = \omega(M)$, all strategies perform equally well.
	\end{enumerate}
\end{theorem}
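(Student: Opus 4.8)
The plan is to use the stochastic-approximation reduction of Section \ref{section:prelims}: by the O.D.E.\ tracking bounds quoted there it suffices to compare, along the solution of \eqref{BasicODE}, the terminal fraction $\delta_N(T)$ produced by the two strategies, the strategy with the smaller terminal $\delta_N$ being the better one, and the monotonicity argument noted before Definition \ref{defn:strategies} justifies restricting attention to $\st_F$ and $\st_L$. First I would record the two drift fields. On influenced slots \eqref{BasicODE} becomes an affine O.D.E.\ with drift $f(\delta)=\tp(1-\delta)-\tq\delta$, whose unique globally attracting equilibrium is $\delta^{*}=\tp/(\tp+\tq)$, with $\delta^{*}<1/2$ by Assumption \ref{rationalinfluence}. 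On un-influenced slots, substituting the Hybrid $C/R$ rates $p_t=p[(1-\mu)+(2\mu-1)\delta]$ and $q_t=q[\mu-(2\mu-1)\delta]$ gives the quadratic drift $g(\delta)=(1-\delta)p_t-\delta q_t=\mu\,\delta(1-\delta)(p-q)+(1-\mu)\big(p(1-\delta)^2-q\delta^2\big)$. Both phase flows are integrable in closed form ($f$ is linear and $g$ is a Riccati equation), so each strategy's terminal state is an explicit composition of the two flows run for O.D.E.-times $bT/M$ and $(1-b)T/M$.

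For $T=\omega(M)$ the O.D.E.-time $T/M\to\infty$, so each phase of length $\Theta(T)$ reaches the equilibrium of its drift: $\st_L$, ending with an influenced block, terminates at $\delta^{*}$, whereas $\st_F$ terminates at the stable equilibrium $\delta_{ni}\in(0,1)$ of $g$. I would locate $\delta_{ni}$ from the sign data $g(0)=p(1-\mu)>0$ and $g(1)=-q(1-\mu)<0$, together with the $\mu$-free identity $g(1/2)=(p-q)/4$; these pin down a unique attracting root, lying above $1/2$ when $p>q$ and below $1/2$ when $p<q$. Comparing $\delta_{ni}$ with $\delta^{*}$ then yields the orderings in (i) and (ii)(b) and the equal-performance boundary of (iii). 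The step needing care here is verifying global stability and the correct basin for the Riccati flow so that the claimed equilibrium is genuinely attained from the relevant starting point.

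For $T=o(M)$ the O.D.E.-time $\tau:=T/M\to0$ and $\delta$ moves only by $O(\tau)$, so I would Taylor-expand each terminal value about $\delta_N(0)$. To first order both strategies give $\delta_N(0)+\big[b\,f+(1-b)\,g\big]\tau$ evaluated at $\delta_N(0)$; that is, the increments of the two phases are additive and their order is immaterial, so the first-order terms cancel. Expanding one order further (writing $h_1=b\tau$, $h_2=(1-b)\tau$ and composing the two flows) the surviving difference is the second-order cross term
\[
\delta_N^{\st_F}(T)-\delta_N^{\st_L}(T)
= b(1-b)\tau^2\big(g'(\delta_N(0))\,f(\delta_N(0))-f'(\delta_N(0))\,g(\delta_N(0))\big)+O(\tau^3).
\]
Thus $\st_F\gg\st_L$ exactly when the Wronskian-type quantity $W(\delta_N(0)):=g'f-f'g$ is negative at $\delta_N(0)$. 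The remaining work is to compute $W$ and read off its sign: I expect it to factor so that the factor $p-(p+q)\delta_N(0)^2$, which vanishes precisely at $p=\tfrac{\delta_N(0)^2}{1-\delta_N(0)^2}q$, supplies the condition relating $p$ and $q$ and the equal-performance case (iii), while the complementary $\mu$-dependent factor supplies the threshold $\mu^{*}=\tfrac{\delta_N(0)^2}{2\delta_N(0)^2-p/(p+q)}$, so that $W<0$ corresponds to the conjunction of the two inequalities in (ii)(a).

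The main obstacle is this last step. Because the first-order terms cancel, the entire comparison rests on the sign of the second-order coefficient $W(\delta_N(0))$, and the delicate part is the algebra that factors $W$ into its two sign-determining pieces and, in particular, checks that the dependence on $\tp,\tq$ enters only through a positive overall factor, so that the resulting criterion is $\tp,\tq$-free exactly as the statement asserts. A secondary difficulty is making the $\omega(M)$ \lq\lq runs-to-equilibrium" claims rigorous: upgrading the finite-horizon O.D.E.-tracking estimates of Section \ref{section:prelims} to statements about the terminal state uniformly in the diverging ratio $T/M$, and confirming that $\delta^{*}$ lies in the basin of $\delta_{ni}$ so that the two-phase composition argument is valid.
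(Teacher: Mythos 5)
Your overall strategy is the same one the paper uses for Theorem \ref{theorem:Model I}, and since the paper's ``proof'' of Theorem \ref{theorem:Model II} consists only of the remark that the same tools apply, there is no detailed argument to compare against: reduce to the O.D.E.\ \eqref{BasicODE}, compose the influenced and un-influenced flows, and treat the regimes $T=o(M)$ and $T=\omega(M)$ separately. Your drift $g(\delta)=\mu\delta(1-\delta)(p-q)+(1-\mu)\bigl(p(1-\delta)^{2}-q\delta^{2}\bigr)$ is correct, and your second-order identity $\delta_N^{\st_F}(T)-\delta_N^{\st_L}(T)=b(1-b)(T/M)^{2}\bigl(g'f-f'g\bigr)(\delta_N(0))+O\bigl((T/M)^{3}\bigr)$ is a cleaner and entirely valid way to organize the $T=o(M)$ computation than the paper's explicit-solution-then-expand route for Theorem \ref{theorem:Model I}.

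The gap sits exactly at the step you flagged as delicate, and it does not close. Writing $g(\delta)=\alpha\delta^{2}+\beta\delta+\gamma$ with $\alpha=(p-q)(1-2\mu)$ and $\gamma=(1-\mu)p$, a direct computation gives $\bigl(g'f-f'g\bigr)(\delta_0)=\tp\,g'(\delta_0)+(\tp+\tq)\bigl(\gamma-\alpha\delta_0^{2}\bigr)$, so the dependence on $\tp,\tq$ does \emph{not} enter only through a positive overall factor; a $\tp,\tq$-free criterion emerges only if the term $\tp\,g'(\delta_0)$ is discarded (the paper's own $T=o(M)$ reduction for Theorem \ref{theorem:Model I}(iii) to the quantity $\delta_N^{2}(0)-A_1'A_2'$ silently does exactly this, being exact only when $\tp=0$). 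Worse, even after setting $\tp=0$ the surviving condition $\gamma<\alpha\delta_N(0)^{2}$ reads $(1-\mu)p<(2\mu-1)(q-p)\delta_N(0)^{2}$, whereas the two inequalities in part (ii)(a) combine (for $\mu>1/2$) to $\mu p<(2\mu-1)(q+p)\delta_N(0)^{2}$; these are genuinely different conditions, so your Wronskian will not factor into the stated pieces. The same obstruction appears in your $T=\omega(M)$ argument: the comparison correctly reduces to $\tp/(\tp+\tq)$ versus the interior attracting root of $g$, but that root can lie below $\tp/(\tp+\tq)$ (take $p\ll q$ and $\mu$ near $1$, so the root is near $0$), contradicting (ii)(b) unless one again takes $\tp\approx 0$. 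In short, the method is the right one and matches the paper's, but carried out honestly it yields criteria that disagree with the theorem as stated; to ``prove'' the statement you would have to make explicit the additional approximation (effectively neglecting $\tp$) under which the stated conditions appear.
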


We conclude that in the Hybrid $C$/$R$ setting, the strategy to influence at the beginning of the time-frame outperforms the strategy to influence at the end in a very limited case. This happens only when the time horizon is small, i.e., at most a vanishing fraction of the individuals change their opinion, and the chosen individual in a time-slot is more likely to conform than rebel, and the stickness of the Yes opinion is above a threshold which is a function of the initial state of the population and the stickness of the No opinion. 


We note that unlike Model I, in this case, in addition to the stickiness of the two opinions, the initial state of the population determines which of the two strategies, namely, influencing right at the beginning of the time horizon and influencing right at the end of the time horizon performs better.

\begin{figure}[!htbp]
	\centering
	\includegraphics[width=0.5\textwidth]{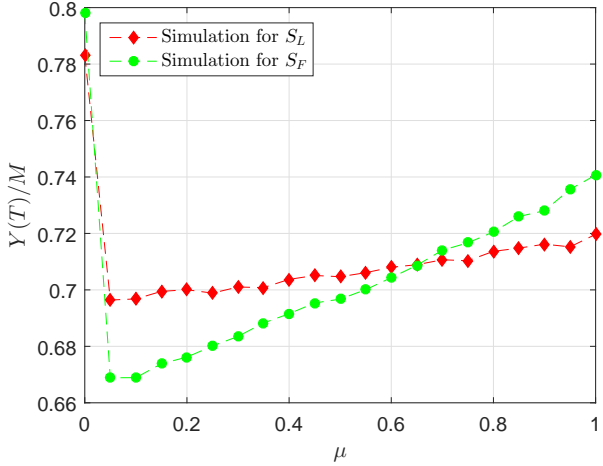}
	\caption{Comparison between $\st_L$ and $\st_F$ for $M = 1000$, $T = 1000$, $b=0.4$, $p=0.2, q=0.9$ and $\widetilde{p}=0.1, \widetilde{q}=0.9$. In this case, similar to that in Theorem \ref{theorem:Model II} (ii), the $\st_L$ strategy outperforms the $\st_F$ strategy for values of $\mu$ below a threshold and the $\st_F$ strategy outperforms the $\st_L$ strategy for values of $\mu$ over the threshold.}
	\label{fig:mixture}
\end{figure}


\section{Proofs}
\label{section:proofs}

\subsection{Proof of Theorem \ref{theorem:Model I}(i)}

We consider the strategies $\st_F$ and $\st_L$ of influencing the first $bT$ time-slots and the last $bT$ time-slots respectively.  

\begin{proposition} \label{equalpq}
	For Model I defined in Section \ref{section:setting}, if the advertiser has a budget of $bT$ time-slots for $0 < b < 1$,  and $p=q$. Then, for all values of the population split $\lambda \in [0,1]$, the strategy of influencing in the last $bT$ time-slots is strictly better than the strategy of influencing in the first $bT$ time-slots, i.e. $\st_L \gg \st_F$.
\end{proposition}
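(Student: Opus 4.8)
The plan is to pass to the O.D.E.\ approximation justified in Section~\ref{section:prelims} and to compare the terminal fractions $\delta_N(T)$ produced by $\st_F$ and $\st_L$; since maximizing ``Yes'' is the same as minimizing the terminal fraction of ``No'', it suffices to show that $\st_L$ yields the smaller $\delta_N(T)$. First I would write down the effective drift in a non-influenced slot for the Hybrid $S/C$ mixture, namely $p_t = p[\lambda + (1-\lambda)\delta_N]$ and $q_t = q[\lambda + (1-\lambda)(1-\delta_N)]$, and substitute into \eqref{BasicODE}. The decisive observation is that when $p=q$ the quadratic conformist contributions cancel identically, leaving the \emph{linear} O.D.E.
\begin{equation*}
M\dot{\delta}_N = p\lambda(1-2\delta_N),
\end{equation*}
whose unique globally stable equilibrium is $1/2$. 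During an influenced slot the dynamics is $M\dot{\delta}_N = \tp - (\tp+\tq)\delta_N$, again linear, with equilibrium $\delta^* = \tp/(\tp+\tq)$; Assumption~\ref{rationalinfluence} gives $\delta^* < 1/2$.

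Because both regimes obey a linear O.D.E.\ of the form $M\dot{\delta} = A - B\delta$, the evolution over a phase of $k$ slots is the affine contraction $\delta \mapsto \delta_{\mathrm{eq}} + e^{-Bk/M}(\delta - \delta_{\mathrm{eq}})$ towards that phase's equilibrium. Writing $\rho_0 = e^{-2p\lambda(1-b)T/M}$ for the contraction factor of the no-influence phase (towards $1/2$, over $(1-b)T$ slots) and $\rho_1 = e^{-(\tp+\tq)bT/M}$ for that of the influence phase (towards $\delta^*$, over $bT$ slots), I would obtain $\delta_N^{\st_F}(T)$ and $\delta_N^{\st_L}(T)$ by composing these two affine maps in the two opposite orders, both started from the common initial value $\delta_N(0)$.

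Carrying out the composition and subtracting, the terms carrying $\delta_N(0)$ (which share the factor $\rho_0\rho_1$) cancel against each other and the constant terms combine, yielding the factored expression
\begin{equation*}
\delta_N^{\st_F}(T) - \delta_N^{\st_L}(T) = \left(\tfrac{1}{2} - \delta^*\right)(1-\rho_0)(1-\rho_1).
\end{equation*}
Each factor is nonnegative: $\tfrac12 - \delta^* > 0$ by Assumption~\ref{rationalinfluence}, $1-\rho_1 > 0$ since $b,T>0$, and $1-\rho_0 \ge 0$ with strict inequality exactly when $\lambda>0$. Hence $\delta_N^{\st_L}(T) \le \delta_N^{\st_F}(T)$, strictly for every $\lambda \in (0,1]$, which is precisely $\st_L \gg \st_F$; at $\lambda=0$ the no-influence drift vanishes ($\rho_0=1$) and the two strategies coincide, consistent with Theorem~\ref{theorem:Model I}(i).

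The only genuine obstacle is the first step: recognizing that $p=q$ forces the conformist terms to cancel so that \emph{both} phases are linear. Once linearity is secured, the affine-map composition and the identity $1-\rho_0-\rho_1+\rho_0\rho_1 = (1-\rho_0)(1-\rho_1)$ are routine, and—crucially—the argument requires no relation between $T$ and $M$, so the conclusion holds for every horizon length. It then remains only to invoke the stochastic-approximation estimates of Section~\ref{section:prelims} to transfer this strict O.D.E.\ inequality back to the expected number of ``Yes'' in the original discrete system.
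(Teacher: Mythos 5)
Your proposal is correct and follows essentially the same route as the paper: split the horizon into the influenced and non-influenced phases, observe that for $p=q$ both phases obey linear O.D.E.s, solve each as an affine contraction toward its equilibrium ($\tp/(\tp+\tq)$ and $1/2$ respectively), and factor the difference of the two compositions as $\left(\tfrac{1}{2}-\tfrac{\tp}{\tp+\tq}\right)(1-\rho_0)(1-\rho_1)$, exactly the paper's expression \eqref{diff_equalpq}. Your explicit remark that strictness fails at $\lambda=0$ is in fact slightly more careful than the proposition's wording and matches the separate treatment of $\lambda=0$ in Theorem~\ref{theorem:Model I}(i).
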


\begin{proof}
	The expected fraction of \lq \lq No"s at the end of time $T$ under the strategies $\st_F$ and $\st_L$ are denoted by $\delta_N^{\st_F}(T)$ and $\delta_N^{\st_L}(T)$ respectively.
	
	We split the solution of the differential equation into two time-periods $[0,bT]$ and $[bT,T]$ depending on the presence or absence of influence corresponding to each strategy. For strategy $\st_F$ we have:
	\begin{itemize}
		\item $0\leq t\leq bT$: The differential equation is given by
		$$ M \dot{\delta}_{N}(t)=-\delta_{N}(t)[\widetilde{p}+\widetilde{q}]+\widetilde{p}. $$
		This can be solved to yield
		\begin{equation*}
		\delta_{N}(bT)=\frac{\widetilde{p}}{\widetilde{p}+\widetilde{q}}+\left(\delta_{N}(0)-\frac{\widetilde{p}}{\widetilde{p}+\widetilde{q}}\right) e^{-b T \frac{\widetilde{p}+\widetilde{q}}{M}}.
		\end{equation*}
		\item $bT\leq t\leq T$: The corresponding differential equation is solved by integrating between $bT$ and $T$. After simplification, this gives us: 
		\begin{eqnarray}
		\delta^{\st_F}_{N}(T)&=&\frac{1}{2}+\left(\frac{\widetilde{p}}{\widetilde{p}+\widetilde{q}}-\frac{1}{2}\right) e^{\frac{-2\lambda}{M}pT(1-b)} \nonumber \\
		&+& \left(\delta_{N}(0)-\frac{\widetilde{p}}{\widetilde{p}+\widetilde{q}}\right)e^{-\frac{T}{M}(2\lambda p(1-b)+b(\widetilde{p}+\widetilde{q}))}. \nonumber \\
		\label{F_equalpq}
		\end{eqnarray}
		
	\end{itemize}
	Similarly for the strategy $\st_L$, we obtain:
	\begin{eqnarray} 
	\delta^{\st_L}_{N}(T)&=& \frac{\widetilde{p}}{\widetilde{p}+\widetilde{q}} +\left(\frac{1}{2}-\frac{\widetilde{p}}{\widetilde{p}+\widetilde{q}}\right) e^{-bT \frac{\widetilde{p}+\widetilde{q}}{M}} \nonumber \\
	&+& \left(\delta_{N}(0)-\frac{1}{2}\right) e^{-\frac{T}{M}(2\lambda p(1-b)+b(\widetilde{p}+\widetilde{q}))}. \nonumber \\
	\label{L_equalpq}
	\end{eqnarray} 
	Comparing the expected fraction of number of people with \lq\lq No" at the end of time $T$ of $\st_F$ and $\st_L$, we get:
	
	\begin{eqnarray}
	\delta_{N}^{\st_F}(T)-\delta_{N}^{\st_L}(T)=
	\left(\frac{1}{2}-\frac{\widetilde{p}}{\widetilde{p}+\widetilde{q}}\right) \left(1-e^{-b T\frac{\widetilde{p}+\widetilde{q}}{M}}\right) \nonumber \\
	\left(1-e^{\frac{-2\lambda}{M}(pT(1-b)}\right). \label{diff_equalpq}
	\end{eqnarray}
	This is always positive whenever $\widetilde{p}\leq\widetilde{q}$. Thus, whenever $p=q$ and $\widetilde{p}\leq\widetilde{q}$, the expression in equation~\eqref{diff_equalpq} is a product of positive terms and is therefore strictly positive for all $\lambda, p, q, M \text{ and } T$. 
\end{proof}

\begin{remark} It is worthwhile to note that if the influencing agency exerts perfect influence i.e., $\widetilde{p}=0$ and  $\widetilde{q}=1$, we obtain $\delta_{N}(bT)=\delta_{N}(0)  e^{-\frac{bT}{N}}$, an exponential decay in the proportion of people who have a \lq \lq No" opinion. When $\delta_{N}(t)$ is very small, the differential equation governing $\delta_{N}(t)$ can be approximated by $M \dot{\delta}_{N}(t)=p\lambda$ which indicates a linear increase in $\delta_{N}(t)$. 
\end{remark}

We have only shown that $\st_L \gg \st_F$. However, it turns out this is sufficient to conclude that under the conditions of Proposition~\ref{equalpq}, $\st_L$ is in fact the optimal strategy. 

\begin{proof}[Proof of Theorem \ref{theorem:Model I}(i)]
	Consider a strategy $\st$ and divide the time interval $[0, T]$ in three parts: $[0, t-1)$ be the first $t \geq 0$ slots until a pair of slots is encountered where influence is followed by no influence, $[t, t+1]$ be the first pair of slots (starting from $t=0$) such that $t$ is an influenced time-slot and $t+1$ is not, and finally $(t+1, T]$. We consider a strategy $\st^\prime$ that differs from $\st$ only in the slots $[t, t+1]$, where the influence and non-influence is swapped. It is immediate from \eqref{F_equalpq} and \eqref{L_equalpq} that $\st^\prime \gg \st$. Inductively, we conclude that $\st_L$ is optimal in the sense of Definition~\ref{defoptimal}. 
\end{proof}

\color{black}
\subsection{Proof of Theorem \ref{theorem:Model I}(ii)}
In the absence of any external influence, the differential equation is given by: 
		\begin{equation}
	\ \dot{\delta}_N(t) = -f(\delta_N(t))  \label{p>q_noinfluence}
	\end{equation}
where, $f(\delta_N(t))=  \frac{\delta_N^2(t)}{M} (1-\lambda)(p-q) - \frac{\delta_N(t)}{M} [(1-\lambda)(p-q)-\lambda (p+q)] - \frac{\lambda p}{M} $ is a quadratic in $\delta_N(t)$ with two distinct roots for all values of $\lambda, p$ and $q$. We denote the roots of $f(\delta_N(t))$ by $A_1$ and $A_2$. Then,
$A_{1}=\frac{1}{2}\left(1+\frac{\sqrt{M^{2}\Delta}-\lambda(p+q)}{(1-\lambda)(p-q)}\right)$ and $A_{2}=\frac{1}{2}\left(1-\frac{\sqrt{M^{2}\Delta}+\lambda(p+q)}{(1-\lambda)(p-q)}\right)$, where $\Delta > 0$ denotes the discriminant of the quadratic. Note that $A_1>A_2$. We get:
\begin{eqnarray} \label{ODE}
\dot{\delta}_N(t) = - \frac{(1-\lambda)(p-q)}{M} (\delta_N(t) - A_1)(\delta_N(t) - A_2) 
\end{eqnarray}
Let $L=(1-\lambda)\frac{p-q}{M}(A_{1}-A_{2})$ and $D_{1}$, $D_{2}$ denote the expression $1-\left(\frac{\rho-A_1}{\rho-A_2}\right)e^{-LT(1-b))}$ evaluated at $\rho=\delta_{N}(0)$ and $\rho=\delta_{N}(bT)$ respectively. For the strategy $\st_F$, for $0 \leq t < bT$, we have:
$$ \delta_N(bT) = \frac{\tp}{\tp+\tq} + \left( \delta_N(0) - \frac{\tp}{\tp+\tq} \right)e^{-bT\frac{\tp+\tq}{M}} $$
and for $bT \leq t \leq T$, from \eqref{ODE}, we get:
\begin{eqnarray*} 
\log \left( \frac{ \delta_N(T) - A_1}{\delta_N(T) - A_2} \right) - \log \left( \frac{ \delta_N(bT) - A_1}{\delta_N(bT) - A_2} \right) \\
= - \frac{(1-\lambda)(p-q)(A_1-A_2)}{M} (1-b)T
\end{eqnarray*}
That is,
\begin{eqnarray*}
\frac{ \delta_N(T) - A_1}{\delta_N(T) - A_2 }  =  \left( \frac{ \delta_N(bT) - A_1}{\delta_N(bT) - A_2 } \right)  e^{- L(1-b)T }
\end{eqnarray*}
Solving this, we get:
\begin{equation} \label{p>q_stF}
\delta_{N}^{\st_F}(T) = A_{2}+\frac{A_{1}-A_{2}}{D_{2}}  
\end{equation}
Similarly,
\begin{eqnarray} \label{p>q_stL}
\delta_N^{S_L}(T) &=& \frac{\tp}{\tp+\tq} \left( 1 - e^{-bT \frac{\tp+\tq}{M}} \right)\nonumber \\
&& +(A_1-A_2) \left( \frac{e^{-bT \frac{\tp+\tq}{M}}}{D_1} \right) + A_2 e^{-bT \frac{\tp+\tq}{M}} \nonumber \\
\end{eqnarray}
From ~\eqref{p>q_stF} and \eqref{p>q_stL}, we get:
\begin{eqnarray*}
&& \delta_N^{S_L}(T) - \delta_N^{S_F}(T)\\
 &=& \frac{\tp}{\tp+\tq} (1-e^{-bT \frac{\tp+\tq}{M}}) + \frac{A_1-A_2}{D_1 D_2} \left( D_2 e^{-bT \frac{\tp+\tq}{M}} - D_1 \right)\\ 
&=& \frac{\tp}{\tp+\tq} (1-e^{-bT \frac{\tp+\tq}{M}}) - (1-e^{-bT \frac{\tp+\tq}{M}})(1-e^{-LT(1-b)}) \\
&& + e^{-bT \frac{\tp+\tq}{M}}e^{-LT(1-b)} \left( 1- \frac{\delta_N(bT) - A_1}{\delta_N(bT) - A_2} \right) \\
&& - e^{-LT(1-b)} \left( 1- \frac{\delta_N(0) - A_1}{\delta_N(0) - A_2} \right) \\
&=& (1-e^{-bT \frac{\tp+\tq}{M}}) \\
&&\times\left[\left(\frac{\tp}{\tp+\tq}-A_2\right)\left(1- e^{-LT(1-b)} \frac{(A_1-A_2)^2}{D_1D_2D_3}\right) \right] \\
&& - (1-e^{-bT \frac{\tp+\tq}{M}})\left[ \frac{A_1-A_2}{D_1D_2} \left(1-e^{-LT(1-b)} \right) \right]
\end{eqnarray*}
where, $D_3 = (\delta_N(bT)-A_2)(\delta_N(0)-A_2)$.

\color{black}

\begin{remark}
	It can be shown that $D_{1},D_{2}$ and $D_{3}$ are always positive. Further, $A_{1}\in(0,1)$ and $A_{2}\in(-\infty,0)$ with both being decreasing functions in $\lambda$.
\end{remark}

Simulations indicate that in this case $\st_L \gg \st_F$. We prove this next for the cases when $T = o(M)$ and $T=\omega(M)$. \footnote{We say that $f(n)$ is $o(g(n))$ (or $\omega(h(n))$ resp.) if for any real constant $c > 0$, there exists an integer constant $n_0 \geq 1$ such that $f(n) < c g(n)$ (or $f(n) > c h(n)$ resp.) for every integer $n \geq n_0$.}
That is, whenever the influencing agency is either able to reach almost no one in the population or able to reach almost everyone, the strategy to influence in the last $bT$ slots is strictly better than the strategy to influence in the first $bT$ slots. 

\begin{lemma}
	For a system of $M$ individuals and a time horizon of $T$ time-slots, let $E$ be the event that each individual is influenced at least once in the time horizon of $T$ time-slots and $F$ be the event that the number of unique influenced individuals is $o(M)$. Then we have that,
	\begin{align*}
	\text{If } & T = \omega(M), \ \lim_{M \to \infty} P(E)  = 1 \\
	\text{ and if } &T = o(M),  P(F)  = 1.
	\end{align*}
\end{lemma}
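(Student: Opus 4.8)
The plan is to treat the two regimes separately, since each rests on an essentially elementary observation about drawing $bT$ individuals uniformly, with replacement, from a population of size $M$. In each of the $bT$ influenced slots, the chosen individual is drawn uniformly from $\{1,\dots,M\}$ independently of the other slots, so the set of influenced individuals is exactly the set of distinct coupons collected in $bT$ draws of a classical coupon-collector process on $M$ coupons. Both claims reduce to statements about this process, so I would set up that correspondence first and then invoke the appropriate elementary bound in each regime.

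For the case $T = o(M)$, I would argue that $F$ holds surely rather than merely with high probability. The number of unique influenced individuals is at most the number of influenced slots, namely $bT$, no matter which individuals happen to be drawn. Since $b$ is a fixed constant and $T = o(M)$, we have $bT = o(M)$, so the count of distinct influenced individuals is deterministically $o(M)$; hence $P(F) = 1$. This direction requires no probabilistic estimate at all, only the trivial counting bound "distinct elements $\le$ number of draws''.

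For the case $T = \omega(M)$, I would use a first-moment (union-bound) argument on the complement. Fix individual $j$ and let $A_j$ be the event that $j$ is never the chosen individual in any of the $bT$ influenced slots; by independence across slots, $P(A_j) = (1 - 1/M)^{bT} \le e^{-bT/M}$. A union bound then gives $P(E^c) \le \sum_{j=1}^M P(A_j) \le M e^{-bT/M} = \exp(\log M - bT/M)$, and I would conclude $P(E) \to 1$ by showing the right-hand side vanishes.

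The main obstacle lies precisely in that final step: the exponent $\log M - bT/M$ tends to $-\infty$ only once $T/M$ eventually dominates $\log M$, so the honest coupon-collector threshold for full coverage is $T = \omega(M \log M)$ rather than $T = \omega(M)$. I would therefore either strengthen the hypothesis to $bT/M - \log M \to \infty$, which holds in all the regimes actually used in the main theorems (where $T$ is taken polynomially larger than $M$, as in the simulated settings with $T/M$ of order $100$), or else remark that the simple union bound is already sufficient for the intended applications. A second-moment estimate on the number of uninfluenced individuals $\sum_j \mathbf{1}_{A_j}$ would give concentration but would not move this threshold, so the union bound is the cleanest tool; the real content is simply flagging the $\log M$ factor so the coverage claim is applied only where $T/M$ genuinely outgrows it.
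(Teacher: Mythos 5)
Your argument is the same as the paper's: for $T=o(M)$ the trivial counting bound (the number of distinct influenced individuals is at most $bT$) and for $T=\omega(M)$ the union bound $P(E^c)\le M(1-1/M)^{bT}$. The obstacle you flag in the second case is genuine, and it is a gap in the paper's own proof, not merely in your write-up: the paper asserts that $P(E)\ge 1-M\left(1-\frac{1}{M}\right)^{bT}$ implies $\lim_{M\to\infty}P(E)=1$, but $M(1-1/M)^{bT}\approx\exp\left(\log M - bT/M\right)$ tends to $0$ only when $bT/M-\log M\to\infty$, i.e.\ essentially when $T=\omega(M\log M)$. For $T=\omega(M)$ with $T=o(M\log M)$ (say $T=M\log\log M$) the bound is vacuous; worse, by the standard coupon-collector threshold the expected number of never-influenced individuals diverges in that regime and one can show $P(E)\to 0$, so the first claim of the lemma is false as stated. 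Your proposed remedies are the right ones: either strengthen the hypothesis to $bT/M-\log M\to\infty$, or weaken the conclusion to what is actually used downstream, namely that all but a vanishing fraction of individuals are influenced with high probability --- this does follow from $T=\omega(M)$ alone by applying Markov's inequality to the count $\sum_j \mathbf{1}_{A_j}$ of uninfluenced individuals, whose mean is $M e^{-bT/M+o(1)}=o(M)$.
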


\begin{proof}
	If $T = \omega(M)$, the probability that an individual is not influenced is $\left( 1-\frac{1}{M}\right)^{bT}$. By the union bound, 
	\begin{equation*}
	P(E) \geq 1 - M \left( 1-\frac{1}{M}\right)^{bT}
	\implies  \lim_{M \to \infty}  P(E)  = 1.
	\end{equation*}
	
	If $T = o(M)$, it trivially follows that the number of unique influenced individuals is upper bounded by $bT = o(M)$.
\end{proof}

\footnotetext{We write $f(x) \approx g(x)$ if $f(x) = g(x) + O(x)$}


\begin{proof}[Proof of Theorem \ref{theorem:Model I}(ii)]
	Note that it is enough to show that $\st_L \gg \st_F$ since an argument on the lines of the proof of Part (i) gives the optimality result. We consider the two regimes separately.
	\begin{itemize}
		\item $T=\omega(M)$: In this case, we have $TL(1-b)=(1-\lambda)(p-q)(A_{1}-A_{2})\frac{T(1-b)}{M}$. 
		We can simplify the above expression by noting that $(A_{1}-A_{2})(1-\lambda)(p-q)=\sqrt{M^{2}\Delta}=\sqrt{(\lambda)^{2}(p+q)^{2}+(1-\lambda^{2})(p-q)^{2}}$. Thus, $e^{-LT(1-b)} \to 0$. Using this, we get that $D_{1}\approx1,D_{2}\approx1$ and $D_{3}\approx -A_{2}(\delta_N(0)-A_2)$. The difference equation then collapses to $\delta_{N}^{\st_L}(T)-\delta_{N}^{\st_F}(T) \approx - A_2-(A_1-A_2)=-A_1$. Since this is always negative, we see that the strategy $\st_L$ outperforms the strategy $\st_F$. 
		
		\item $T=o(M)$: In this case, we have $e^{-LT(1-b)}=e^{-\sqrt{M^{2}\Delta}(1-b)(T/M)}$. As $T/M \to 0$, $e^{-LT(1-b)} \approx 1- \frac{T}{M}\sqrt{M^{2}\Delta}(1-b)$. This gives us $D_{1} \approx D_{2} \text{ and } D_{3}\approx(\delta_{N}(0)-A_{2})^{2}$. By writing $D_{1}D_{2}D_{3}$ as $((A_{1}-A_{2})+\frac{T}{M}\sqrt{M^{2}\Delta}(1-b)(\delta_{N}(0)-A_{1}))^{2}$,the first and second terms in the difference equation simplify significantly. Let $\psi = \frac{(A_{1}-A_{2})^{2}}{((A_{1}-A_{2})+\frac{T}{M}\sqrt{M^{2}\Delta}(1-b)(\delta_{N}(0)-A_{1}))^{2}}$. Then, the difference equation simplifies to $\frac{\frac{T}{M}\sqrt{M^{2}\Delta}(1-b)\psi}{A_{1}-A_{2}}(A_{1}A_{2}-(\delta_{N}(0))^{2}$ which is negative for all $\lambda$. This is because $A_{1}>0 \text{ but } A_{2}<0$. Thus we get that the $\st_L$ strategy is better than the $\st_F$ strategy.
	\end{itemize}
\end{proof}
\vspace{-0.5in}

\subsection{Proof of Theorem \ref{theorem:Model I}(iii)}
\label{section:p<q}
In this case, in the absence of influence, the differential equation can be rewritten as
\begin{align}
\dot{\delta}_{N}(t)=&\frac{\delta_{N}^{2}(t)}{M}(1-\lambda)(q-p)-\frac{\delta_{N}(t)}{M}[(1-\lambda)(q-p) \nonumber \\
&+\lambda(p+q)]+\frac{\lambda p}{M}. \label{eq:8}
\end{align}

This quadratic equation also has two distinct roots $A_{1}^{\prime}$ and $A_{2}^{\prime}$ given by $A_{1}^{\prime}=\frac{1}{2}\left(1+\frac{\sqrt{N^{2}\Delta}+\lambda(p+q)}{(1-\lambda)(q-p)}\right)$ and $A_{2}^{\prime}=\frac{1}{2}\left(1+\frac{\lambda(p+q)-\sqrt{N^{2}\Delta}}{(1-\lambda)(q-p)}\right)$, where $\Delta=\frac{(p-q)^{2}(1-\lambda^{2})+\lambda^{2}(p+q)^{2}}{N^{2}}$ is the discriminant. Solving like before, we observe that: 
\begin{eqnarray}
\delta_{N}^{\st_F}(t)&=& A_{2}^{\prime}+\frac{A_{1}^{\prime}-A_{2}^{\prime}}{D_{2}^{\prime}} \\
\delta_{N}^{\st_L}(t)&=& \left(\frac{\widetilde{p}}{\widetilde{p}+\widetilde{q}}\right) \left(1-e^{-b  T  \frac{\widetilde{p}+\widetilde{q}}{M}}\right) \nonumber \\
&+&  A_{2}^{\prime} \left(e^{-b  T  \frac{\widetilde{p}+\widetilde{q}}{M}}\right)+ \left(A_{1}^{\prime}-A_{2}^{\prime}\right)  \left(\frac{e^{-b  T  \frac{\widetilde{p}+\widetilde{q}}{M}}}{D_{1}^{\prime}}\right). \nonumber
\end{eqnarray}
Thus we get: 
	\begin{align}\label{diff_lambda_unique}
	&\delta_{N}^{\st_L}(t)-\delta_{N}^{\st_F}(t) \nonumber \\
	&= \left(1-e^{-bT \frac{\widetilde{p}+\widetilde{q}}{M}}\right) \times \nonumber \\
	&  \left[\left(\frac{\widetilde{p}}{\widetilde{p}+\widetilde{q}}-A_{2}^{\prime}\right)\left(1-e^{L^{\prime}T(1-b)}\frac{(A_{1}^{\prime}-A_{2}^{\prime})^{2}}{D_{1}^{\prime}D_{2}^{\prime}D_{3}^{\prime}}\right)\right] \nonumber \\
	&  - \left(1-e^{-b  T  \frac{\widetilde{p}+\widetilde{q}}{M}}\right)\left[ \frac{A_{1}^{\prime}-A_{2}^{\prime}}{D_{1}^{\prime}D_{2}^{\prime}}\left(1-e^{L^{\prime}T(1-b)}\right)\right],
	\end{align}

where $L^{\prime}=(1-\lambda)\frac{p-q}{N}(A_{1}^{\prime}-A_{2}^{\prime})$ and $D_{1}^{\prime}$ and $D_{2}^{\prime}$ is the expression $\left[1-\frac{(x-A_{1}^{\prime})}{x-A_{2}^{\prime}}e^{L^{\prime}T(1-b)}\right]$ evaluated at $x=\delta_{N}(0)$ and $x=\delta_{N}(bT)$ respectively. Also, $D_{3}^{\prime}=(\delta_N(bT)-A_2^{\prime}) (\delta_N(0)-A_2^{\prime})$. 

Observe that since the equation in \eqref{diff_lambda_unique} takes opposite signs at extremal values of $\lambda$, there exists a $\lambda^*$, which we call the cross-over $\lambda$, such that for $\epsilon > 0$, for $\lambda$ in the neighbourhood $(\lambda^* - \epsilon, \lambda^* + \epsilon)$, $\st_F \gg \st_L$ for $\lambda < \lambda^*$ and $\st_L \gg \st_F$ for $\lambda > \lambda^*$. We show that for $T = o(M)$ or $\omega (M)$, the crossover $\lambda$ is unique.

\begin{remark}
	It can be shown that $A_{1}^{\prime} \in [1,\infty)$, $A_{2}^{\prime} \in \left[0,\frac{p}{q+p}\right]$ and both are increasing functions in $\lambda$. The variation of $D_{1},D_{2}\text{ and }D_{3}$ is much more complex and depends on $\delta_{N}(0)$. In particular, if $\delta_N(0)>(\frac{p}{q+p})  e^{\frac{bT}{M}}$, all three are positive.
\end{remark}


\begin{proof}[Proof of Theorem \ref{theorem:Model I}(iii)]
	We compare the strategies $\st_L$ and $\st_F$. An argument similar to that in the proof of Part (i) gives us the optimality of the strategies in different regimes of $\lambda$. Again, we divide the proof into two cases:
	\begin{itemize}
		\item $T=\omega(M)$: In this case, $TL^{\prime}(1-b)=(1-\lambda)(p-q)(A_{1}^{\prime}-A_{2}^{\prime})\frac{T(1-b)}{M}$. 
		We can simplify the above expression by noting that $(A_{1}^{\prime}-A_{2}^{\prime})(1-\lambda)(p-q)=\sqrt{M^{2}\Delta}=\sqrt{(\lambda)^{2}(p+q)^{2}+(1-\lambda^{2})(q-p)^{2}}$. We can write the difference equation as:
		\begin{small}
			\begin{eqnarray*}
				\delta_{N}^{\st_L}(T)-\delta_{N}^{\st_F}(T) &\approx & -A_{2}^{\prime}+A_{2}^{\prime}  e^{L^{\prime}T(1-b)}\frac{(A_{1}^{\prime}-A_{2}^{\prime})^{2}}{D_{1}^{\prime}D_{2}^{\prime}D_{3}^{\prime}} \nonumber \\
				&& -\frac{A_{1}^{\prime}-A_{2}^{\prime}}{D_{1}^{\prime}D_{2}^{\prime}}+\frac{A_{1}^{\prime}-A_{2}^{\prime}}{D_{1}^{\prime}D_{2}^{\prime}}  e^{L^{\prime}T(1-b)}) 
			\end{eqnarray*}
		\end{small}
		which can be simplified to yield:
		$\delta_{N}^{\st_L}(T)-\delta_{N}^{\st_F}(T) \approx 
		A_{2}^{\prime} \left[e^{-L^{\prime}T(1-b)} \left(1-\frac{A_{2}^{\prime}}{A_{1}^{\prime}}\right)-1\right]$ 
		but we know that  $\frac{A_{2}^{\prime}}{A_{1}^{\prime}}<1$ for all $\lambda$ except when $\lambda=0$ (since $\lambda = 0$ implies $A_{2}^{\prime}=0$). Thus the difference is always negative except when $\lambda=0$. Thus, in this case, $\lambda^*$ is $0$ and unique. 
		
		\item $T=o(M)$: In this case, we have $e^{L^{\prime}T(1-b)}=e^{\sqrt{M^{2}\Delta}(1-b)T/M}$. As $T/M \to 0$, $e^{L^{\prime}T(1-b)} \approx 1+\sqrt{M^{2}\Delta}(1-b) \frac{T}{M}$. This gives us $D_{1}^{\prime} \approx D_{2}^{\prime} \text{ and } D_{3}^{\prime}\approx(\delta_{N}(0)-A_{2}^{\prime})^{2}$. We can write $D_{1}^{\prime}D_{2}^{\prime}D_{3}^{\prime}=((A_{1}^{\prime}-A_{2}^{\prime})-x\mu(\delta_N(0)-A_{1}^{\prime}))^{2}$. 
		Thus reducing the first and second terms of the difference equation appropriately and plugging in, the difference equation reduces to: 
		\begin{equation} \label{diff_p<q}
		\delta_{N}^{\st_L}(T)-\delta_{N}^{\st_F}(T) \approx \frac{x\mu}{A_{1}^{\prime}-A_{2}^{\prime}} [\delta_{N}^{2}(0)-A_{1}^{\prime}A_{2}^{\prime}]
		\end{equation} 
		$A_{1}^{\prime} \text{ and } A_{2}^{\prime}$ are both strictly increasing functions. $A_{1}^{\prime}A_{2}^{\prime}=0$ at $\lambda=0$ and increases with $\lambda$. So, the difference in~\eqref{diff_p<q} goes from positive to zero, becomes negative and then stays negative. Thus, $\lambda^{*}$ is unique.
	\end{itemize}
\end{proof}

\vspace{-0.4in}

\subsection{Proof of Proposition \ref{PS1}}
	
	\begin{proof}
		Note that this is the $p<q$ case. From the discussion in Section~\ref{section:p<q}, we get that $A_{2}^{\prime} = 0$. Also, $D_{1}^{\prime}>1$ and $D_{2}^{\prime}>1$. With this simplification, the difference in \eqref{diff_p<q} simplifies to: 
		\begin{equation}
		\delta_{N}^{\st_L}(t)-\delta_{N}^{\st_F}(t)=
		(1-e^{-\frac{bT}{M}})  \frac{e^{\frac{Tq(1-b)}{M})}-1}{(1-\lambda){D_{1}^{\prime}D_{2}^{\prime}}}
		\end{equation}
		which is positive except for $\lambda \in [0, 1)$ 
	\end{proof}
	The performance gap of both strategies shrinks to $0$ as $\lambda \to 1$. This is consistent with the result obtained in \cite{MTNS} (See Lemma 2 in \cite{MTNS}).
	

\subsection{Proof of Theorem \ref{theorem:Model II}}

The proof follows using the same tools used in the proof of Theorem \ref{theorem:Model I}. We omit the details due to lack of space. 

\section{Conclusions and Future Work} \label{sec:Conclusion}

In this work, we proposed a variant of the voter model which can be used to model variation in the nature of the individuals in society. We evaluate the performance of campaigning strategies as a function of the nature of individuals when the goal is to maximize the fraction of individuals with a favorable opinion at the end of a known finite time-horizon.

We conclude that if individuals are mostly unaffected by the opinion of their peers or tend to go against the majority opinion, influencing at the end of the finite time-horizon is optimal. In the case where individuals are affected by the opinion of their peers and tend to adopt the opinion of the majority, influencing at the end of the finite time-horizon can be strictly sub-optimal if an individual with a positive opinion is not very likely to change their mind when compared to the probability of an individual with a negative opinion changing their mind. 

Possible extensions of this work include modeling the connections between individuals in the society using a graph such that individuals susceptible to being influenced by others are only influenced by their neighbors in this graph.

\section*{Appendix: Martingale Concentration} \label{sec:App}
In this section, we use Concentration inequalities for Martingales to obtain a concentration results for fraction of people with opinion \lq \lq No", 
for the $p=q$ case and the influence is in the last $bT$ time-slots for Model I. Similar arguments will give the corresponding result for the strategy to influence in the first $bT$ time-slots. We show that $\delta_N(T)$ is close to the solution of the corresponding O.D.E. in \eqref{BasicODE} at $T$. This justifies using the O.D.E. solutions to arrive at the optimal strategy for the external influencing agency for the discrete time model. 
Recall that for $p=q$ (see \eqref{L_equalpq}), for the strategy $\st_L$, for $T$ sufficiently large, we get:

\begin{small}
\begin{eqnarray}
\delta_N(T) &=& \frac{\widetilde{p}}{\widetilde{p}+\widetilde{q}} +\left(\frac{1}{2}-\frac{\widetilde{p}}{\widetilde{p}+\widetilde{q}}\right) e^{-bT \frac{\widetilde{p}+\widetilde{q}}{M}} \nonumber \\
    &+& \left(\delta_{N}(0)-\frac{1}{2}\right) e^{-\frac{T}{M}(2\lambda p(1-b)+b(\widetilde{p}+\widetilde{q}))} \nonumber \\
    &\approx & \frac{\tp}{\tp+\tq} + \left(\frac{1}{2} - \frac{\tp}{\tp+\tq} \right) \left(1-\frac{\tp+\tq}{M}\right)^{bT} \nonumber \\
&+& \left( \delta_N(0) - \frac{1}{2} \right)\left(1-\frac{\tp+\tq}{M}\right)^{bT}\left(1-\frac{2\lambda p}{M}\right)^{(1-b)T}. \nonumber \\
\label{approxsoln}  
\end{eqnarray} 
\end{small}
We denote this approximate solution by $\delta_N^{approx}$. Then we have the following result.

\begin{proposition}
	
For Model I defined in Section \ref{section:setting} and under Assumption \ref{rationalinfluence}, if the advertiser has a budget of $bT$ time-slots for $0 < b < 1$, given $\epsilon > 0$, $\exists \mu > 0$, which is a function of the model parameters $p, q, \tp, \tq, T, b$ and $\epsilon$ such that 
$$ P(|\delta_N(T) - \delta_N^{approx}| > \epsilon) < \mu. $$
\end{proposition}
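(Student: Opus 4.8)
The plan is to compare the random trajectory $\delta_N(t)=N(t)/M$ \emph{directly} with the difference-equation solution $\delta_N^{approx}$ (bypassing the O.D.E.) and to control the gap by a martingale concentration inequality. First I would decompose each increment as
$$\delta_N(t+1)=\delta_N(t)+\tfrac1M E[\chi(t+1)\mid\mathcal{F}_t]+\tfrac1M\,\xi_{t+1},\qquad \xi_{t+1}:=\chi(t+1)-E[\chi(t+1)\mid\mathcal{F}_t],$$
so that $\{\xi_{t+1}\}$ is a martingale-difference sequence with $|\xi_{t+1}|\le 2$ (as $\chi\in\{-1,0,1\}$ and its conditional mean lies in $[-1,1]$).

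The crucial observation, special to the hypothesis $p=q$, is that the conditional drift is \emph{affine} in $\delta_N(t)$ in both phases of the strategy $\st_L$. Indeed, for a Type $C$ slot the drift from \eqref{BasicModel} equals $\delta_N(1-\delta_N)(p-q)$, which vanishes when $p=q$; hence in a no-influence slot $E[\chi(t+1)\mid\mathcal{F}_t]=\lambda p\,(1-2\delta_N(t))$, and in an influence slot it equals $\tp-(\tp+\tq)\delta_N(t)$. Writing $E[\chi(t+1)\mid\mathcal{F}_t]=a_t-b_t\,\delta_N(t)$ with deterministic $a_t,b_t$ (the pair taking one of two values according to whether slot $t$ is influenced), the recursion for $\delta_N$ becomes affine. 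Consequently $E[\delta_N(t)]$ satisfies exactly the same deterministic two-phase recursion, and solving it (no-influence phase with contraction $1-2\lambda p/M$ toward $1/2$ over $(1-b)T$ slots, then influence phase with contraction $1-(\tp+\tq)/M$ toward $\tp/(\tp+\tq)$ over $bT$ slots) reproduces precisely the right-hand side of \eqref{approxsoln}; that is, $\delta_N^{approx}=E[\delta_N(T)]$ with no residual error term.

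Next I would analyze the centered error $e(t):=\delta_N(t)-E[\delta_N(t)]$, which obeys $e(t+1)=(1-b_t/M)\,e(t)+\tfrac1M\xi_{t+1}$ with $e(0)=0$. Unrolling this linear recursion gives the closed form
$$e(T)=\sum_{k=1}^{T}c_k\,\xi_k,\qquad c_k:=\tfrac1M\prod_{j=k}^{T-1}\Bigl(1-\tfrac{b_j}{M}\Bigr)\in\bigl[0,\tfrac1M\bigr],$$
where the weights $c_k$ are \emph{deterministic} because each $b_j$ depends only on the fixed phase of slot $j$ under $\st_L$. Hence $S_n:=\sum_{k\le n}c_k\xi_k$ is a martingale with bounded increments $|c_k\xi_k|\le 2c_k$, and Azuma--Hoeffding yields
$$P\bigl(|\delta_N(T)-\delta_N^{approx}|>\epsilon\bigr)=P(|e(T)|>\epsilon)\le 2\exp\!\Bigl(-\tfrac{\epsilon^2}{8\sum_{k=1}^{T}c_k^2}\Bigr)\le 2\exp\!\Bigl(-\tfrac{\epsilon^2 M^2}{8T}\Bigr),$$
using $\sum_k c_k^2\le T/M^2$; taking $\mu$ equal to this bound proves the claim, with $\mu$ expressed through the model parameters via the weights $c_k$.

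The main obstacle is really the step that renders everything else routine: recognizing and exploiting the exact linearization at $p=q$. It is this affine structure that (i) forces the mean to follow the deterministic recursion with no residual, so that $\delta_N^{approx}$ is literally $E[\delta_N(T)]$, and (ii) makes the weights $c_k$ deterministic, which is exactly the hypothesis Azuma--Hoeffding needs. For $p\ne q$ the Type $C$ drift is genuinely quadratic, the error recursion acquires a state-dependent contraction factor, and one would instead need a Gronwall-type comparison combined with a stopping-time argument to keep the increments bounded; this is precisely why the concentration statement is restricted to the $p=q$ case. A minor point to verify is that $\mu$ as written is a meaningful bound (strictly below $1$) over the parameter ranges of interest, and that the martingale property of $S_n$ holds with respect to the filtration $\{\mathcal{F}_t\}$ already fixed in Section~\ref{section:prelims}.
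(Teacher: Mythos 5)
Your proof is correct and rests on the same core idea as the paper's: at $p=q$ the Type~$C$ drift $\delta_N(1-\delta_N)(p-q)$ vanishes, so the conditional drift is affine in $\delta_N$ in both phases of $\st_L$, the compensated process is an exact martingale, and Azuma--Hoeffding finishes the job. (You also correctly inferred the $p=q$ and $\st_L$ restrictions, which live in the surrounding text rather than in the proposition statement.) The packaging differs in a way worth noting. The paper treats the two phases separately, rescaling $\delta_N(t)$ by the inverse contraction factor to form the martingales $X(t)=r^{-t}\delta_N(t)-\sum_k \lambda p\, r^{-k}$ and $Y(t)=\tr^{-t}\delta_N(t)-\sum_k \tp\,\tr^{-k}$, applying Azuma to each, converting back via $\epsilon_1'=\tr^{T}\epsilon_1$ and $\epsilon_2'=r^{(1-b)T}\epsilon_2$, and taking a union bound; the price is increments of size $\tr^{-T}(\tr+2)$ and some bookkeeping to see that the resulting expression matches \eqref{approxsoln}. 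You instead unroll the centered error $e(t)=\delta_N(t)-E[\delta_N(t)]$ into a single weighted martingale $\sum_k c_k\xi_k$ with deterministic weights $c_k\le 1/M$, which yields one application of Azuma, the clean uniform bound $2\exp(-\epsilon^2M^2/(8T))$, and the sharper structural observation that $\delta_N^{approx}$ is \emph{exactly} $E[\delta_N(T)]$ for the difference equation rather than merely an expression to be matched at the end. Your closing remark about why $p\ne q$ breaks the argument (state-dependent contraction, need for a Gronwall/stopping-time comparison) is consistent with the paper's own observation that $X(t)$ becomes a super- or submartingale in that case.
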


\begin{proof} 
For $t \in [(1-b)T, T]$, we have
\begin{eqnarray*}
E[\delta_N(t+1)|\mathcal{F}_t] &=& \delta_N(t) + E[\chi(t+1)|\mathcal{F}_t]/M \\
&=& \delta_N(t) + [(1-\delta_N(t)) \tp - \delta_N(t) \tq]/M \\
&=& \tr \delta_N(t) + \tp/M,
\end{eqnarray*}
where $\tr = 1-(\tp+\tq)/M$.

Define $Y(t) = \tr^{-t} \delta_N(t) - \sum\limits_{k=1}^t \tp \tr^{-k}$. Then, for $t \leq T$, $Y(t)$ is a Martingale w.r.t the filtration $\{ \mathcal{F}_t \}_{t \geq 0}$, with bounded differences. In fact $|Y(t) - Y(t-1)| \leq \tr^{-T}(\tr+2)$. Then, by Azuma-Hoeffding, we have that for $\epsilon_1 > 0$,
\begin{equation} \label{mart1} P(|Y(T)-Y((1-b)T)| > \epsilon_1) < 2 \mu_1 \end{equation}
where, $\mu_1 = \exp \left( - \frac{\tr^T \epsilon_1^2}{2 bT (1+\tr)}\right)$.

Similarly, for $p=q$, $X(t) = r^{-t} \delta_N(t) - \sum\limits_{k=1}^t \lambda p r^{-k}$, where $r = 1-\lambda(p+q)/M = 1-2\lambda p/M$, is a Martingale with bounded differences for $t \leq T$. So, for $\epsilon_2 > 0$,
\begin{equation} \label{mart2} P(|X((1-b)T)-X(0)| > \epsilon_2) < 2\mu_2 \end{equation}
where, $\mu_2 = \exp \left( - \frac{r^T \epsilon_2^2}{2 (1-b)T (1+r)}\right)$.

For the sake of convenience, we write $Y(T) \stackrel{\epsilon_1}{\approx} Y((1-b)T)$ and $X((1-b)T) \stackrel{\epsilon_2}{\approx} X(0)$ with probability $2\mu_1$ and $2\mu_2$ respectively, instead of \eqref{mart1} and \eqref{mart2}. This implies, with probability $\mu_1$ and $\mu_2$, 
$$\delta_N(T) \stackrel{\epsilon'_1}{\approx} \tr^{bT} \delta_N((1-b)T) + \sum\limits_{k=(1-b)T+1}^T \tp \tr^{T-k}$$
$$\text{and }\delta_N((1-b)T) \stackrel{\epsilon'_2}{\approx} r^{(1-b)T} \delta_N(0) + \sum\limits_{k=1}^{(1-b)T} p \lambda r^{(1-b)T-k}. $$
respectively. Here $\epsilon'_1 = \tr^T \epsilon_1$ and $\epsilon'_2 = r^{(1-b)T} \epsilon_2$.

By the union bound, with probability $\mu_1+\mu_2$, for a suitable $\epsilon > 0$,
\begin{eqnarray}
\delta_N(T) & \stackrel{\epsilon}{\approx} & \tr^{bT}r^{(1-b)T} \delta_N(0) + \tr^{bT} \sum\limits_{k=1}^{(1-b)T} p \lambda r^{(1-b)T-k} \nonumber \\
&+& \sum\limits_{k=(1-b)T+1}^T \tp \tr^{T-k} \nonumber \\
& \stackrel{\epsilon}{\approx} & \tr^{bT}r^{(1-b)T} \delta_N(0) + \frac{\tp}{\tp+\tq} \left(1-\tr^{bT} \right) \nonumber \\
&+& \frac{\tr^{bT}}{2} (1-r^{(1-b)T}) \label{conc}.
\end{eqnarray}
It is easy to verify that the expressions in \eqref{approxsoln} and \eqref{conc} match. This concludes the proof.
\end{proof}

This generalizes the concentration result in \cite{MTNS} in one direction. However, note that in Proposition 2 of \cite{MTNS}, the assumption $p=q$ was not required. 
It is easy to check that for $p>q$, $X(t)$ defined above is a supermartingale, (while for $p<q$ it is a submartingale), however, since the corresponding solution of the O.D.E. is fairly complicated, it is difficult to conclude similar high-probability results for closeness of the solutions of the recursion and the corresponding O.D.E. We rely on the simulations (Figure~\ref{fig:tracking}) to demonstrate this. 


\section*{Acknowledgements}

This work was supported in part by an Indo-French grant on ``Machine Learning for Network Analytics". The work of Neeraja Sahasrabudhe was also supported in part by the DST-INSPIRE Faculty Fellowship from the Govt. of India. The work of Sharayu Moharir was supported in part by a seed grant from IIT Bombay. 

\bibliographystyle{IEEEtran}
\bibliography{references}

\begin{thebibliography}{10}
\providecommand{\url}[1]{#1}
\csname url@samestyle\endcsname
\providecommand{\newblock}{\relax}
\providecommand{\bibinfo}[2]{#2}
\providecommand{\BIBentrySTDinterwordspacing}{\spaceskip=0pt\relax}
\providecommand{\BIBentryALTinterwordstretchfactor}{4}
\providecommand{\BIBentryALTinterwordspacing}{\spaceskip=\fontdimen2\font plus
\BIBentryALTinterwordstretchfactor\fontdimen3\font minus
  \fontdimen4\font\relax}
\providecommand{\BIBforeignlanguage}[2]{{%
\expandafter\ifx\csname l@#1\endcsname\relax
\typeout{** WARNING: IEEEtran.bst: No hyphenation pattern has been}%
\typeout{** loaded for the language `#1'. Using the pattern for}%
\typeout{** the default language instead.}%
\else
\language=\csname l@#1\endcsname
\fi
#2}}
\providecommand{\BIBdecl}{\relax}
\BIBdecl

\bibitem{xia2011opinion}
H.~Xia, H.~Wang, and Z.~Xuan, ``Opinion dynamics: A multidisciplinary review
  and perspective on future research,'' \emph{International Journal of
  Knowledge and Systems Science (IJKSS)}, vol.~2, no.~4, pp. 72--91, 2011.

\bibitem{holley1975ergodic}
R.~A. Holley and T.~M. Liggett, ``Ergodic theorems for weakly interacting
  infinite systems and the voter model,'' \emph{The annals of probability}, pp.
  643--663, 1975.

\bibitem{clifford1973model}
P.~Clifford and A.~Sudbury, ``A model for spatial conflict,''
  \emph{Biometrika}, vol.~60, no.~3, pp. 581--588, 1973.

\bibitem{yildiz2011discrete}
E.~Yildiz, D.~Acemoglu, A.~E. Ozdaglar, A.~Saberi, and A.~Scaglione, ``Discrete
  opinion dynamics with stubborn agents,'' 2011.

\bibitem{majmudar2015voter}
J.~Majmudar, S.~M. Krone, B.~O. Baumgaertner, and R.~C. Tyson, ``The voter
  model and jump diffusion.'' \emph{arXiv preprint arXiv:1511.04160}, 2015.

\bibitem{MTNS}
B.~Kumar, N.~Sahasrabudhe, and S.~Moharir, ``On influencing opinion dynamics
  over finite time horizons,'' \emph{MTNS}, 2018.

\bibitem{kandhway2014run}
K.~Kandhway and J.~Kuri, ``How to run a campaign: Optimal control of sis and
  sir information epidemics,'' \emph{Applied Mathematics and Computation}, vol.
  231, pp. 79--92, 2014.

\bibitem{kotnis2017incentivized}
B.~Kotnis, A.~Sunny, and J.~Kuri, ``Incentivized campaigning in social
  networks,'' \emph{IEEE/ACM Transactions on Networking}, 2017.

\bibitem{kempe2005influential}
D.~Kempe, J.~M. Kleinberg, and {\'E}.~Tardos, ``Influential nodes in a
  diffusion model for social networks.'' in \emph{ICALP}, vol.~5.\hskip 1em
  plus 0.5em minus 0.4em\relax Springer, 2005, pp. 1127--1138.

\bibitem{eshghi2017spread}
S.~Eshghi, V.~M. Preciado, S.~Sarkar, S.~S. Venkatesh, Q.~Zhao, R.~D'Souza, and
  A.~Swami, ``Spread, then target, and advertise in waves: Optimal capital
  allocation across advertising channels,'' \emph{arXiv preprint
  arXiv:1702.03432}, 2017.

\bibitem{asano2008optimal}
E.~Asano, L.~J. Gross, S.~Lenhart, and L.~A. Real, ``Optimal control of vaccine
  distribution in a rabies metapopulation model.'' \emph{Mathematical
  biosciences and engineering: MBE}, vol.~5, no.~2, pp. 219--238, 2008.

\bibitem{ledzewicz2011optimal}
U.~Ledzewicz and H.~Sch{\"a}ttler, ``On optimal singular controls for a general
  sir-model with vaccination and treatment,'' \emph{Discrete and Continuous
  Dynamical Systems}, vol.~2, pp. 981--990, 2011.

\bibitem{lashari2012optimal}
A.~A. Lashari and G.~Zaman, ``Optimal control of a vector borne disease with
  horizontal transmission,'' \emph{Nonlinear Analysis: Real World
  Applications}, vol.~13, no.~1, pp. 203--212, 2012.

\bibitem{chierichetti2009rumor}
F.~Chierichetti, S.~Lattanzi, and A.~Panconesi, ``Rumor spreading in social
  networks,'' in \emph{International Colloquium on Automata, Languages, and
  Programming}.\hskip 1em plus 0.5em minus 0.4em\relax Springer, 2009, pp.
  375--386.

\bibitem{boyd2005gossip}
S.~Boyd, A.~Ghosh, B.~Prabhakar, and D.~Shah, ``Gossip algorithms: Design,
  analysis and applications,'' in \emph{INFOCOM 2005. 24th Annual Joint
  Conference of the IEEE Computer and Communications Societies. Proceedings
  IEEE}, vol.~3.\hskip 1em plus 0.5em minus 0.4em\relax IEEE, 2005, pp.
  1653--1664.

\bibitem{pittel1987spreading}
B.~Pittel, ``On spreading a rumor,'' \emph{SIAM Journal on Applied
  Mathematics}, vol.~47, no.~1, pp. 213--223, 1987.

\bibitem{borkar2008stochastic}
V.~S. Borkar, ``Stochastic approximation,'' \emph{Cambridge Books}, 2008.

\end{thebibliography}

\end{document}